\newtheorem{theorem}{Theorem}
\newtheorem{proposition}[theorem]{Proposition}
\newtheorem{lemma}[theorem]{Lemma}
\newtheorem{definition}[theorem]{Definition}
\newtheorem{example}[theorem]{Example}
\newtheorem{remark}[theorem]{Remark}
\newcommand{\cat}[1]{\ensuremath{\mathbf{#1}}}
\newcommand{\id}[1][]{\ensuremath{\mathrm{id}_{#1}}}
\newcommand{\M}{\ensuremath{\mathbb{M}}}
\DeclareMathOperator{\CPs}{\ensuremath{\mathrm{CP}^*}}
\DeclareMathOperator{\Tr}{Tr}
\title{Ontological models for quantum theory as functors}
\author{Alexandru Gheorghiu
  \institute{California Institute of Technology}
  \email{andrugh@caltech.edu}
 \and
  Chris Heunen
  \institute{University of Edinburgh}
  \email{chris.heunen@ed.ac.uk}}
\begin{document}
\sloppy

\maketitle
\begin{abstract}
  We interpret ontological models for finite-dimensional quantum theory as functors from the category of finite-dimensional Hilbert spaces and bounded linear maps to the category of measurable spaces and Markov kernels. 
  This uniformises several earlier results, that we analyse more closely: 
  Pusey, Barrett, and Rudolph's result rules out monoidal functors; 
  Leifer and Maroney's result rules out functors that preserve a duality between states and measurement; 
  Aaronson et al's result rules out functors that adhere to the Schr{\"o}dinger equation.
  We also prove that it is possible to have epistemic functors that take values in signed Markov kernels.
\end{abstract}

\section{Introduction}

Is the wavefunction of quantum theory an objective property of reality, or merely a statistical quantity associated with a probability distribution over the actual elements of reality? This question divides foundations of quantum theory into two camps -- the former theories are \emph{ontic}, whereas the latter theories are \emph{epistemic} -- and has occupied quantum foundations greatly. 
There are many ontic interpretations of quantum mechanics, such as many-worlds, de Broglie-Bohm, or modal theory.
But there are not many epistemic theories that fully reproduce the predictions of quantum mechanics. 
One example of an epistemic theory is Spekkens' toy model~\cite{spekkens2007evidence}, but that only considers a restricted version of quantum mechanics.
The difficulty in having an epistemic interpretation of quantum mechanics is partly explained by no-go theorems that constrain models attempting to reproduce quantum mechanics using classical probability distributions. Notable such obstructions are: Bell's theorem~\cite{bell1964einstein}, that rules out local models; the Kochen-Specker theorem~\cite{kochen1975problem}, that rules out noncontextual models; the Pusey-Barrett-Rudolph (PBR) theorem~\cite{puseybarrettrudolph:reality}, that rules out models in which independently prepared quantum states correspond to independent ontic states. 

Underlying all these investigations is the question: is it possible to have some translation from quantum theory to probability theory? Whether such a translation preserving certain structural aspects of quantum theory is possible explains whether quantum theory is ontic or epistemic. There is a branch of mathematics whose entire reason for being is to translate structure between different areas, namely category theory~\cite{leinster:categorytheory}. This suggests phrasing translation questions about possible ontological models as \emph{functors}, and that is exactly what this paper does. 

In Sections~\ref{sec:ontological} and~\ref{sec:operational} we recognise ontological models as functors from (the category of finite-dimensional Hilbert spaces and bounded linear maps modelling) finite-dimensional quantum theory to probability theory (as modelled by the category of Borel spaces and Markov kernels). The former category contains states and measurements as morphisms, and the latter category contains probability measures as morphisms, but both contain more morphisms, incorporating dynamics in a natural way.

We then ask the question whether such functors satisfying various properties can exist (see also~\cite{abramskyheunen:operational}). This language uniformises several earlier results, and lets us analyse their structure more closely.
\begin{itemize}
  \item Can there be an epistemic functor that preserves tensor products? Section~\ref{sec:monoidal} analyses the PBR theorem~\cite{puseybarrettrudolph:reality} in these terms to provide a negative answer.
  \item Can there be an epistemic functor that preserves duality between states and measurements? Section~\ref{sec:dagger} analyses~\cite{leifer2013maximally} categorically to rule out any such functor, never mind an epistemic one.
  \item Can there be an epistemic functor that preserves the Schr{\"o}dinger equation? Section~\ref{sec:equivariant} analyses~\cite{aaronson2013psi} categorically to show that there can be no maximally nontrivial such functor.
\end{itemize}
Moreover, the formulation in terms of functors naturally suggests other questions.
\begin{itemize}
  \item What if we change the target category? Section~\ref{sec:signed} shows that an epistemic model is possible when we move from Markov kernels to signed Markov kernels.
\end{itemize}
Epistemic functors also seem possible when using quantum measures rather than signed measures, but this runs into technical issues; see Appendix~\ref{sec:quantummeasure}.
We also leave open the following naturally suggested questions: Can there be a (co)limit-preserving epistemic functor? Can there be an (op)lax monoidal epistemic functor? Can there be an epistemic functor at all? Towards the latter question: there are epistemic models of quantum theory, e.g.\ those in~\cite{lewis2012distinct,aaronson2013psi}, but these mappings from quantum states to probability distributions are not functorial. As far as we are aware, no such mapping is known to exist.

\section{Ontological models} \label{sec:ontological}

Interpretations of quantum mechanics that describe an objective reality (realist interpretations) do so in the context of an \emph{ontological model}. Let us recall the standard definitions~\cite{leifer:review, aaronson2013psi}.

\begin{definition}
  An \emph{ontological model} is a Borel space\footnote{In~\cite{leifer:review}, an ontological model is defined simply as a measurable space, rather than a Borel space. However, a measurable space by itself does not have sufficient structure for certain properties of interest regarding ontological models. In particular, the notion of \emph{support} of a measure over the space needs to be defined, see Definition~\ref{def:maximallyepistemic} below. Thus we consider Borel spaces.} $\Lambda$, called the \emph{ontic space}. Write $\Sigma_\Lambda$ for its $\sigma$-algebra.
\end{definition}

\begin{definition}
  An \emph{ontological theory} of quantum mechanics is a theory satisfying the following:
  \begin{enumerate}
    \item Each finite-dimensional Hilbert space $H$ has an associated ontological model $(\Lambda, \Sigma_{\Lambda})$;
    \item Each state $\ket{\psi} \in H$ has an associated probability measure $\mu_{\psi} \colon \Sigma_{\Lambda} \rightarrow [0,1]$ with $\mu_{\psi}(\Lambda) = 1$;
    \item Each orthonormal measurement $M = \{ \ket{\phi_1}, \Ket{\phi_2}, \ldots \ket{\phi_{\dim(H)}} \}$ has a set of response functions $\{ \xi_{k,M} \colon \Lambda \rightarrow [0,1] \mid 1 \leq k \leq \dim(H)\}$ satisfying:
	\begin{align*}
	  \forall \ket\psi \in H \colon & \int_{\Lambda} \xi_{k,M}(\lambda) d \mu_{\psi}(\lambda) = | \braket{\phi_k \mid \psi}|^2 \\
	  \forall \lambda \in \Lambda \colon & \sum_{i = 1}^d \xi_{k,M} (\lambda) = 1 
    \end{align*}
  \end{enumerate}
\end{definition}

\begin{example} \label{ex:universal}
  The simplest example of an ontological theory is the following:
  \begin{enumerate}
    \item $\Lambda = \mathbb{CP}^{\dim(H)-1}$ is the complex projective space of $H$ under its Borel $\sigma$-algebra;
	\item $\mu_{\psi}(U) = \chi_{U}(\psi)$, writing $\chi_{U}$ for the indicator function of the subset $U \subseteq \Lambda$;
	\item $\xi_{k,M}(\lambda) = | \braket{\phi_k \mid \lambda}|^2$.
  \end{enumerate}
  Of course, this is merely a restatement of the original Hilbert space formulation.
\end{example}

Ontological theories (also called \emph{interpretations}) of quantum mechanics come in two types: ontic and epistemic. The wavefunction is regarded as an objective property of reality in the former, and as a statistical quantity in the latter.

\begin{definition}
  An ontological theory of quantum mechanics is \emph{epistemic} if there exist states $\ket{\psi}, \ket{\phi} \in H$, satisfying $0 < | \braket{\psi\mid\phi}| < 1$ and
  $
	D(\mu_{\psi}, \mu_{\phi}) < 1
  $,
  where
  \[
	D(\mu_{\psi}, \mu_{\phi}) = \sup\limits_{\Omega \in \Sigma_{\Lambda}} | \mu_{\psi}(\Omega) - \mu_{\phi}(\Omega) |
  \]
  is the \emph{variational distance} between the probability measures $\mu_{\psi}$ and $\mu_{\phi}$ associated to $\ket{\psi}$ and $\ket{\phi}$.

  An ontological theory is \emph{ontic} when it is not epistemic.
\end{definition}

In essence, the previous definition says an ontological theory is epistemic when there exists a pair of distinct but overlapping states, whose associated distributions over the ontic space have overlapping support. 
Example~\ref{ex:universal} is ontic.

The motivation for considering epistemic theories is explained in detail in~\cite{leifer:review}. Briefly, epistemic theories attempt to address the following question: to what extent can quantum uncertainty be explained as lack of knowledge of fundamental physical degrees of freedom? The Bayesian view of probability is that it represents a state of knowledge about a system or a process. In the case of ontological theories, the distribution associated to a quantum state encodes the uncertainty in the underlying ontic state. This is analogous to statistical mechanics, where a macroscopic property like the temperature or pressure of a gas corresponds to a probability distribution over the system's phase space (representing the space of possible position and momenta of the gas particles). 
The point of epistemic theories is to then argue that quantum states that cannot be perfectly distinguished (\textit{i.e.} states with nonzero overlap) should correspond to overlapping probability distributions. That is to say that the uncertainty in discerning which quantum state characterises a system stems from there being ontic states compatible with multiple quantum states.

The condition for an ontological theory to be epistemic does not specify which pair of states should have overlapping distributions; the requirement is merely that such a pair exist. It may be more natural to require that the overlap between states be completely explained by the overlap in their associated distributions; this is called \emph{maximally epistemic}~\cite{leifer2013maximally}. 
Similarly, it may be more natural to require that whether states overlap at all is completely explained by whether their associated distributions overlap at all; this is called \emph{maximally nontrivial}~\cite{aaronson2013psi}.

\begin{definition}\label{def:maximallyepistemic}
  For a state $\psi \in H$ with ontic space $\Lambda$, let $\Lambda_\psi = \{ \lambda \in \Lambda \mid \lambda \in U \in \Sigma_{\Lambda} \implies \mu_{\psi}(U) > 0 \}$ be the support of $\mu_\psi$.
  An ontological theory of quantum mechanics is maximally epistemic if for all $\psi, \phi \in H$:
  \begin{equation}\label{eq:maximallyepistemic}
    \mu_\psi(\Lambda_\phi)
    = | \braket{\phi \mid \psi}|^2
  \end{equation}
  It is \emph{maximally nontrivial} when $\braket{\phi \mid \psi}=0$ if and only if
  $\mu_\psi(\Lambda_\phi)=0$.
\end{definition}

\section{Operational models} \label{sec:operational}

We think of a category as consisting of (models of) physical systems and processes that can be composed in sequence and in parallel~\cite{coeckekissinger,heunenvicary}.
We first axiomatise probabilistic measurements in such a setting.

\begin{definition}\label{def:operationalcategory}
  An \emph{operational category} consists of:
  \begin{itemize}
	\item a monoidal category $\cat{C}$, with tensor product $\otimes$ and unit $I$;
	\item an object $2$ in $\cat{C}$, called the \emph{distinguishing object};
	\item a set $\Omega$, whose elements are called \emph{probabilities};
	\item a function $\langle - \rangle \colon \cat{C}(I,2) \to \Omega$ called \emph{evaluation}.
  \end{itemize}
\end{definition}

Maps $X \to 2$ are also called \emph{measurements}, maps $I \to X$ \emph{states}, and maps $I \to 2$ \emph{abstract probabilities}. 
We will often fix $\Omega$ to be the unit interval $[0,1]$, in which case we also speak of a \emph{concrete operational category}, to justify the name `probabilities' for elements of $\Omega$.
One might assume much more structure than the above definition.
For example, the set of probabilities $\Omega$, might be taken to be a partially ordered set, a monoid, or even a semiring.
Similarly, the distinguishing object $2$ might be assumed to be a generator, or a coproduct $I+I$~\cite{jacobsetal:effectus}.
Finally, the category $\cat{C}$ might be assumed to be compact, or dagger~\cite{coeckekissinger,heunenvicary}.
Here we will only assume the bare minimum of the above definition.

The prototypical example of an operational category is standard quantum theory~\cite{coeckeheunenkissinger:cp}.

\begin{example}\label{ex:quantum}
  The category $\cat{FHilb}$ of finite-dimensional Hilbert spaces and (bounded) linear maps is operational with  distinguishing object $\mathbb{C}^2$, abstract probabilities $[0,1]$, and the Born rule 
  \[
    \langle \psi \rangle = |a|^2 \quad\text{if}\quad \psi(1)=(a,b)
  \]
  as evaluation function for $\psi \colon \mathbb{C} \to \mathbb{C}^2$.
  States correspond to Hilbert space vectors, and measurements are projective measurements with 2 outcomes.
  All negative results below apply equally well to the category of all Hilbert spaces and bounded linear maps, regardless of dimension.
\end{example}

We will be interested in other operational settings, such as probability theory. Here, the model of a physical system is its set of (ontic) states, and a physical processes simply evolves ontic states.

\begin{example}\label{ex:srel}
  A \emph{Markov kernel} from a measurable space $(X,\Sigma_X)$ to a measurable space $(Y,\Sigma_Y)$ is a probability-measure-valued function $f \colon X \times \Sigma_Y \to [0,1]$ such that $f(-,V) \colon X \to [0,1]$ is a bounded measurable function for each $V \in \Sigma_Y$, and $f(x,-) \colon \Sigma_Y \to [0,1]$ is a probability measure for each $x \in X$.
  Measurable spaces and Markov kernels form a category $\cat{SRel}$ with composition $(g \circ f)(x,W) = \int g(y,W) f(x,\mathrm{d}y)$, and Dirac measures $\id[X](x,U)=1$ for $X \ni x \in U \in \Sigma_X$ and $\id[X](x,U)=0$ for $X \ni x \not\in U \in \Sigma_X$ as identities~\cite{panangaden:labelledmarkovprocesses,panangaden:srel}.
  Here, the notation $f(x,\mathrm{d}y)$ is short for $\mathrm{d}f(x,-)$.

  The category $\cat{SRel}$ is (symmetric) monoidal. In the abstract, because it is the Kleisli category of the monoidal probability Giry monad~\cite{jacobs:measurableeffects}. We describe the monoidal structure concretely.
  The tensor product $(X,\Sigma_X) \otimes (Y,\Sigma_Y)$ of objects is carried by $X \times Y$ and furnished with the $\sigma$-algebra $\Sigma_{X \times Y}$ generated by the sets $U \times V$ for $U \in \Sigma_X$ and $V \in \Sigma_Y$.
  The tensor unit is the singleton set $I=\{*\}$ with its unique $\sigma$-algebra.
  The tensor product $f \otimes f' \colon X \otimes X' \to Y \otimes Y'$ of Markov kernels $f \colon X \to Y$ and $f' \colon X' \to Y'$ is determined by $((x,x'),V \times V') \mapsto f(x,V) \cdot f'(x',V')$.

  States $\psi \colon I \to X$ in $\cat{SRel}$ correspond to probability measures $\Sigma_X \to [0,1]$ on $X$.
  As distinguishing object we take $2=\{0,1\}$, with the discrete $\sigma$-algebra $\Sigma_2=\{\emptyset,\{0\},\{1\},2\}$.
  Measurements $X \to 2$ correspond to Markov kernels $f \colon X \times \Sigma_2 \to [0,1]$, which are completely determined by a measurable function $x \mapsto f(x,0) \colon X \to [0,1]$.
  Probabilities $f \colon I \to 2$ thus correspond exactly with elements $f(*,0)$ of $[0,1]$. Thus the category $\cat{SRel}$ becomes operational under $\Omega=[0,1]$ with evaluation $\langle f \rangle = f(*,0)$.

  Write \cat{BoRel} for the full subcategory of Borel spaces. It inherits all structure of \cat{SRel} described above.
\end{example}

As mentioned when we defined ontological models, we will be interested in ontic spaces that can be represented as Borel spaces.
For this reason, in examining realist interpretations of quantum mechanics from this categorical perspective, we will consider the subcategory \cat{BoRel} of \cat{SRel} to correspond to ontological models. 
Since $\cat{FHilb}$ corresponds to quantum mechanics, an interpretation (or an ontological theory) will correspond to some sort of translation from $\cat{FHilb}$ to $\cat{BoRel}$. This translation should preserve the empirical predictions of the Born rule. The most natural translation from the categorical perspective is one that preserves the categorical structure of composition: a functor. 
Some formulations of ontological models also assume that unitary evolution on the quantum side, is mapped to a stochastic evolution on the ontological side. In fact, as noted in~\cite{leifer:review}, the evolution of ontic states can be modelled through a Markov kernel. Additionally imposing that composing unitaries is preserved at the ontological level, recovers exactly the functorial map in question. 

\begin{definition}\label{def:operationalmodel}
  Let $\cat{C}$ and $\cat{D}$ be operational categories with the same probabilities $\Omega$. 
  An \emph{operational model} is a functor $F \colon \cat{C} \to \cat{D}$ that satisfies $F(I_{\cat{C}})=I_{\cat{D}}$, $F(2_{\cat{C}})=2_{\cat{D}}$ and $\langle F(\psi) \rangle_{\cat{D}} = \langle \psi \rangle_{\cat{C}}$.
  \footnote{We follow the categorical/logical/model-theoretic convention that terms the domain a ``theory'', and the functor a ``model'' or ``interpretation'', rather than the physical convention where a ``theory'' is a class of ``models''.}
\end{definition}

We are specifically interested in $\cat{C} = \cat{FHilb}$ and $\cat{D} = \cat{BoRel}$. In this case, the only fundamental difference between having an ontological theory and having an operational model is that the operational model forces the translation from $\cat{FHilb}$ to $\cat{BoRel}$ to preserve composition. 

To define when operational models are ontic or epistemic, we first need to say in terms of operational categories when probability measures \emph{do not} overlap. This leads to \emph{anti-distinguishability}\footnote{In the literature, anti-distinguishability is also referred to as \emph{state discrimination}~\cite{barnett2009quantum}.}.

\begin{definition}\label{def:distinguishability}
  Let $\cat{C}$	be a concrete operational category, and $\Psi \subseteq \cat{C}(I,A)$ a collection of states. A measurement $\chi \colon A \to 2$ \emph{anti-distinguishes} a fixed state $\psi \in \Psi$ if
  \[
    \langle \chi \circ \psi \rangle = 0\text,
    \qquad
    \sum_{\phi \neq \psi \in \Psi} \langle \chi \circ \phi \rangle = 1\text.
  \]
  A state $\psi \in \Psi$ is \emph{anti-distinguishable} within $\Psi$ if there is a measurement that anti-distinguishes it.
  Finally, $\Psi$ is \emph{anti-distinguishable} if each $\psi \in \Psi$ is anti-distinguishable.
\end{definition}

Analogously, the more familiar concept of ``distinguishability'' is defined in the same way, but having $0$ and $1$ swapped, so that $\langle \chi \circ \psi \rangle = 1$ and $\sum_{\phi \neq \psi \in \Psi} \langle \chi \circ \phi \rangle = 0$. 
Anti-distinguishability will be of more interest to us.

Probability measures correspond to states in $\cat{BoRel}$.
According to the above definition, two probability measures $\psi$ and $\phi$ are anti-distinguishable precisely when there exists a measurement $\chi$ that satisfies $0=\langle \chi \circ \psi \rangle=\int \chi(*,0) \mathrm{d}\psi$ and $1=\langle \chi \circ \phi \rangle=\int \chi(*,0) \mathrm{d}\phi$.
Because $\int_\Lambda \mathrm{d}\mu = \int_{\Lambda_\mu} \mathrm{d}\mu$ by definition of support, this means that $\chi$ assigns measure $0$ to the support of $\psi$ almost everywhere with respect to $\phi$, and assigns measure $1$ to the support of $\phi$ almost everywhere with respect to $\psi$. In other words, $\psi$ and $\phi$ are exactly non-overlapping measures.
Operationally, the measurement $\chi$ can be thought of as an experiment that samples from a given distribution and always rejects $\psi$, but always accepts $\phi$.
Notice that if there are only two distributions, distinguishability and anti-distinguishability are equivalent.

\begin{definition}\label{def:ontic}
  An operational model $F$ is \emph{ontic} when it maps distinct states $\psi\neq \phi$ in $\cat{C}$ to (anti-) distinguishable states $F(\psi),F(\phi)$ in \cat{BoRel}; otherwise it is \emph{epistemic}.
\end{definition}

\section{Monoidal operational models}\label{sec:monoidal}

This section is the first of several considering whether operational models with certain extra properties can exist. The property under scrutiny in this section is preserving tensor products, that is, we set out to establish a categorical version of the PBR theorem (see Appendix~\ref{sec:pbr} for a brief discussion of the original PBR theorem). 
Recall that a functor $F \colon \cat{C} \to \cat{D}$ is \emph{monoidal} when there are a natural isomorphisms $F_{A,B} \colon F(A) \otimes F(B) \to F(A \otimes B)$ and a morphism $F_0 \colon I \to F(I)$ satisfying certain coherence requirements. 
This means that if $\psi \colon I \to A$ is a state in $\cat{C}$, then $F(\psi) \circ F_0 \colon I \to F(A)$ is a state in $\cat{D}$; by abuse of notation we will simply write $F(\psi)$ for this state.
We will refer to an operational model in which the functor is monoidal as a \emph{monoidal operational model}.

Let us start with some properties of anti-distinguishability that hold in any monoidal operational model. We will then establish some properties of anti-distinguishability specific to $\cat{BoRel}$. 

\begin{lemma}\label{lem:preservedistinguishability}
  Let $F \colon \cat{C} \to \cat{D}$ be a monoidal operational model with $0\neq1 \in \Omega$, and $\Psi \subseteq \cat{C}(I,A)$ be a collection of states. If a measurement $\chi \colon A \to 2$ anti-distinghuishes $\psi \in \Psi$, then $F(\chi) \colon F(A) \to 2$ anti-distinguishes $F(\Psi) = \{F(\phi) \mid \phi \in \Psi\}$. 
  Therefore, if $\psi$ is anti-distinguishable in $\Psi$, then $F(\psi)$ is anti-distinguishable in $F(\Psi)$; and if $\Psi$ is anti-distinguishable, then so is $F(\Psi)$.
\end{lemma}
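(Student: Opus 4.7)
The plan is to verify directly the two clauses of Definition~\ref{def:distinguishability} for $F(\chi) \colon F(A) \to 2$ anti-distinguishing $F(\psi)$ within the family $F(\Psi)$. The computation is almost mechanical once the right identity is in hand, since both anti-distinguishability conditions are stated purely in terms of the evaluation of composites of states with measurements.

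First I would establish the key identity: for every state $\phi \colon I \to A$ in $\cat{C}$,
\begin{equation*}
  \langle F(\chi) \circ F(\phi) \rangle_{\cat{D}}
  \;=\; \langle F(\chi \circ \phi) \rangle_{\cat{D}}
  \;=\; \langle \chi \circ \phi \rangle_{\cat{C}}.
\end{equation*}
The first equality is functoriality (together with the abuse of notation $F(\phi) = F(\phi) \circ F_0$ afforded by the monoidal coherence map, which is harmless since $F(I)=I$ and $F(2)=2$ make both $\chi\circ\phi$ and $F(\chi)\circ F(\phi)$ abstract probabilities in their respective categories). The second equality is the defining clause $\langle F(-) \rangle_{\cat{D}} = \langle - \rangle_{\cat{C}}$ of an operational model, applied to the abstract probability $\chi \circ \phi \colon I \to 2$.

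With this identity in hand, I would instantiate it at $\phi = \psi$ to get $\langle F(\chi) \circ F(\psi) \rangle = \langle \chi \circ \psi \rangle = 0$, which is the first clause of anti-distinguishability; and I would sum it over $\phi \in \Psi \setminus \{\psi\}$ to get $\sum_{\phi \neq \psi} \langle F(\chi) \circ F(\phi) \rangle = \sum_{\phi \neq \psi} \langle \chi \circ \phi \rangle = 1$, which is the second clause. The two follow-up statements are then immediate: ``$F(\psi)$ is anti-distinguishable in $F(\Psi)$'' is exactly what the main claim gives, and ``$F(\Psi)$ is anti-distinguishable'' follows by applying the main claim once for each $\psi \in \Psi$.

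I do not expect any genuine obstacles, as the result is ultimately a functoriality computation rather than a monoidality one; the monoidal hypothesis enters only through the interpretation of $F(\phi)$ as a state of $F(A)$. The one point requiring a convention is the reading of $F(\Psi)=\{F(\phi)\mid \phi\in\Psi\}$ when $F$ is not injective on $\Psi$: as a literal set its cardinality can shrink, and the term-by-term identification in the second sum only makes sense if $F(\Psi)$ is understood as the image indexed by $\Psi$ (i.e.\ with multiplicities). The assumption $0 \neq 1 \in \Omega$ is used only implicitly, to keep anti-distinguishability from being a vacuous property.
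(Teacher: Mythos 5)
Your proof is correct and follows the same route as the paper, which simply states that the lemma ``follows directly from Definitions~\ref{def:operationalmodel} and~\ref{def:distinguishability}''; your key identity $\langle F(\chi)\circ F(\phi)\rangle_{\cat{D}} = \langle F(\chi\circ\phi)\rangle_{\cat{D}} = \langle\chi\circ\phi\rangle_{\cat{C}}$ is exactly the content being left implicit there. Your side remarks (on the $F_0$ abuse of notation, on $F$ possibly not being injective on $\Psi$, and on the role of $0\neq 1$) are accurate and do not change the argument.
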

\begin{proof}
  Follows directly from Definitions~\ref{def:operationalmodel} and~\ref{def:distinguishability}.
\end{proof}

If $F \colon \cat{C} \to \cat{D}$ is an operational model, and $\psi \colon I \to A$ a state in $\cat{C}$, then there is a state $(F(\psi) \otimes F(\psi)) \circ \lambda \circ F_0 \colon I \to F(A) \otimes F(A)$; where $\lambda$ is the left unitor of $\cat{D}$. We will supress the coherence isomorphisms, which the following lemma justifies, and simply write $F(\psi)^{\otimes 2}$, and inductively define $F(\psi)^{\otimes n}$ similarly. Similarly, if $\chi \colon I \to 2$ is a measurement in $\cat{C}$, write $F(\chi)$ for the induced measurement $I \to 2$ in $\cat{D}$. 

\begin{lemma}\label{lem:preserventensordistinguishable}
  Let $F \colon \cat{C} \to \cat{D}$ be a monoidal operational model with $\Omega=[0,1]$, and $\Psi \subseteq \cat{C}(I,A)$ a collection of states.
  If $\{\psi^{\otimes n} \mid \psi \in \Psi\}$ is anti-distinguishable, then so is $\{F(\psi)^{\otimes n} \mid \psi \in \Psi\}$.
\end{lemma}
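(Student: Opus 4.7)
The plan is to reduce the claim to Lemma~\ref{lem:preservedistinguishability} applied to the collection of tensor powers $\Psi^{\otimes n} := \{\psi^{\otimes n} \mid \psi \in \Psi\} \subseteq \cat{C}(I, A^{\otimes n})$, and then to transport the resulting conclusion across the coherence isomorphisms supplied by the monoidality of $F$.

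First I would verify the hypotheses of Lemma~\ref{lem:preservedistinguishability}: since $\Omega = [0,1]$ we have $0 \neq 1$, and by assumption $\Psi^{\otimes n}$ is anti-distinguishable in $\cat{C}(I,A^{\otimes n})$. So for each $\psi \in \Psi$ there is a measurement $\chi_\psi \colon A^{\otimes n} \to 2$ with $\langle \chi_\psi \circ \psi^{\otimes n} \rangle = 0$ and $\sum_{\phi \neq \psi \in \Psi} \langle \chi_\psi \circ \phi^{\otimes n} \rangle = 1$. Applying Lemma~\ref{lem:preservedistinguishability} to this collection and measurement yields that $F(\chi_\psi) \colon F(A^{\otimes n}) \to 2$ anti-distinguishes $F(\psi^{\otimes n})$ within $F(\Psi^{\otimes n}) = \{F(\phi^{\otimes n}) \mid \phi \in \Psi\}$.

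Next I would invoke monoidality. Iterating the natural isomorphism $F_{A,B}$ together with $F_0$, and appealing to the coherence theorem for monoidal functors, one obtains an isomorphism $m_n \colon F(A)^{\otimes n} \xrightarrow{\cong} F(A^{\otimes n})$ under which $F(\psi)^{\otimes n}$ is identified with $F(\psi^{\otimes n})$; this is precisely the abuse of notation fixed in the paragraph preceding the lemma. Pre-composing $F(\chi_\psi)$ with $m_n$ then produces a measurement $F(A)^{\otimes n} \to 2$ that anti-distinguishes $F(\psi)^{\otimes n}$ within $\{F(\phi)^{\otimes n} \mid \phi \in \Psi\}$, since evaluation $\langle -\rangle$ is invariant under the identification $m_n$. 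Because $\psi \in \Psi$ was arbitrary, every element of $\{F(\psi)^{\otimes n} \mid \psi \in \Psi\}$ is anti-distinguishable.

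There is no real obstacle here: the content of the lemma is that monoidal functors commute with tensor powers of states up to coherent isomorphism, so anti-distinguishability of $n$-fold copies is transported along $F$ in exactly the same way Lemma~\ref{lem:preservedistinguishability} transports anti-distinguishability of single states. The only mild care required is the bookkeeping with the coherence isomorphisms $F_{A,B}$ and $F_0$, which is already wrapped into the notational convention $F(\psi)^{\otimes n}$ announced before the statement.
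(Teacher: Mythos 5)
Your proposal is correct and follows essentially the same route as the paper: the paper inlines the Lemma~\ref{lem:preservedistinguishability}-style computation (using that $F$ preserves evaluation) and makes the identification of $F(\psi)^{\otimes n}$ with $F(\psi^{\otimes n})$ explicit via a commutative coherence diagram, which is exactly the content of your transport along $m_n$. Factoring through Lemma~\ref{lem:preservedistinguishability} first and then conjugating by the coherence isomorphism is just a cleaner modularization of the same argument.
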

\begin{proof}
  Fix $\psi \in \Psi$, and say $\chi \colon A^{\otimes n} \to 2$ satisfies $\langle \chi \circ \psi^{\otimes n}\rangle =0$ and $\sum_{\psi \neq \phi \in \Psi} \langle \chi \circ \phi^{\otimes n}\rangle=1$. Now, we have to be slightly more precise about tensor products of states under $F$. Because $F$ is a monoidal functor, the following diagram commutes.
  \[\begin{tikzpicture}[xscale=4,yscale=1.5]
  	\node (tl) at (0,2) {$I$};
  	\node (t) at (1,2) {$I^{\otimes n}$};
  	\node (l) at (0,1) {$F(I)$};
  	\node (m) at (1,1) {$F(I)^{\otimes n}$};
  	\node (r) at (2,1) {$F(A)^{\otimes n}$};
  	\node (b) at (1,0) {$F(I^{\otimes n})$};
  	\node (br) at (2,0) {$F(A^{\otimes n})$};
  	\draw[->] (tl) to node[above]{$\lambda_I^n$} (t);
  	\draw[->] (tl) to node[left]{$F_0$} (l);
  	\draw[->] (l) to node[below]{$\lambda_{F(I)}^n$} (m);
  	\draw[->] (t) to node[right]{$F_0^{\otimes n}$} (m);
  	\draw[->] (m) to node[below]{$F(\psi)^{\otimes n}$} (r);
  	\draw[->] (m) to node[right]{$F_n$} (b);
  	\draw[->] (b) to node[below]{$F(\psi^{\otimes n})$} (br);
  	\draw[->] (r) to node[right]{$F_n$} (br);
  	\draw[->] (l) to[out=-90,in=180] node[left]{$F(\lambda^n)$} (b);
  \end{tikzpicture}\]
  Hence 
  \begin{align*}
    \langle F(\chi) \circ F(\psi^{\otimes n} \circ \lambda^n) \circ F_0 \rangle 
    = \langle F(\chi \circ \psi^{\otimes n} \circ \lambda^n) \circ F_0 \rangle 
    = \langle \chi \circ \psi^{\otimes n} \rangle = 0
  \end{align*}
  and similarly $\sum_{\phi \neq \psi \in \Psi} \langle F(\chi) \circ F(\phi) \rangle=1$.
\end{proof}

\noindent The next two lemmas concern specific properties of $\cat{BoRel}$.

\begin{lemma}\label{lem:twotensordistinguishablesrel}
  Let $\phi,\psi \in \cat{BoRel}(I,A)$ be states in $\cat{BoRel}$.	
  If $\{\phi \otimes \phi, \phi \otimes \psi, \psi \otimes \phi, \psi \otimes \psi\}$ is anti-distinguishable, then so is $\{\phi,\psi\}$.
\end{lemma}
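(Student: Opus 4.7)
The plan is to argue the contrapositive. If $\{\phi, \psi\}$ is not anti-distinguishable in $\cat{BoRel}$, then unpacking Definition~\ref{def:distinguishability} for a pair of probability measures, no measurable function $\chi \colon A \to [0,1]$ can satisfy $\int \chi\,d\phi = 0$ and $\int \chi\,d\psi = 1$, which is equivalent to $\phi$ and $\psi$ not being mutually singular. Hence the lattice-theoretic meet $\alpha := \phi \wedge \psi$ is a nonzero positive measure, dominated by both $\phi$ and $\psi$ as measures. In particular $\alpha \ll \phi$ and $\alpha \ll \psi$, and this lifts to $\alpha \otimes \alpha$ being a nonzero measure on $A \otimes A$ that is absolutely continuous with respect to each of the four product measures $\phi\otimes\phi$, $\phi\otimes\psi$, $\psi\otimes\phi$, and $\psi\otimes\psi$.

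Now I would assume, for contradiction, that the 4-tuple is anti-distinguishable. The plan is to extract from Definition~\ref{def:distinguishability} four measurable functions $g_{ab} \colon A \otimes A \to [0,1]$, indexed by $(a,b) \in \{\phi, \psi\}^2$, such that $g_{ab}$ vanishes $(a \otimes b)$-almost everywhere, and together they form a partition of unity $\sum_{a,b} g_{ab} = 1$ on $A \otimes A$. By absolute continuity, each $g_{ab}$ vanishes $\alpha\otimes\alpha$-almost everywhere, so summing over the four labels gives $1 = \sum_{a,b} g_{ab} = 0$ on an $\alpha \otimes \alpha$-conull subset of $A \otimes A$. This forces $\alpha \otimes \alpha = 0$ and hence $\alpha = 0$, contradicting the nonvanishing of $\alpha$.

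The main delicate step I expect is the extraction of the joint partition of unity: Definition~\ref{def:distinguishability} assigns a separate 2-outcome measurement to each state of the 4-tuple, rather than an a priori four-outcome joint measurement. In $\cat{BoRel}$ one would combine them using the structure of the distinguishing object $2$ (which carries the discrete $\sigma$-algebra, so that Markov kernels into it are determined by a single measurable $[0,1]$-valued function) together with the sum condition $\sum_{(a',b') \neq (a,b)} \int g_{ab}\,d(a' \otimes b') = 1$, normalising and pasting the four individual anti-distinguishing measurements into a measurable four-component function that sums to $1$ pointwise. Once this assembly is carried out, the absolute-continuity squeeze on $\alpha \otimes \alpha$ collapses all four vanishing conditions simultaneously on the overlap and closes the lemma in a single line.
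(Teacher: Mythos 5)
Your setup is sound: reducing anti-distinguishability of a pair of probability measures to mutual singularity, forming the overlap measure $\alpha=\phi\wedge\psi$, and observing that $\alpha\otimes\alpha$ is a nonzero measure dominated by (hence absolutely continuous with respect to) all four product measures are all correct steps. The genuine gap is exactly the step you flagged as delicate, and it cannot be closed. Definition~\ref{def:distinguishability} does \emph{not} provide a four-outcome partition of unity: for each state $a\otimes b$ it provides a \emph{separate} two-outcome measurement $g_{ab}\colon A\otimes A\to 2$ satisfying $\int g_{ab}\,\mathrm{d}(a\otimes b)=0$ and $\sum_{(a',b')\neq(a,b)}\int g_{ab}\,\mathrm{d}(a'\otimes b')=1$. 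The second condition normalises \emph{integrals against the other three measures}; it says nothing pointwise, and the four functions need not sum to $1$ anywhere. Without $\sum_{a,b}g_{ab}=1$ holding $\alpha\otimes\alpha$-almost everywhere, the final squeeze produces no contradiction.

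Moreover, no ``normalising and pasting'' can manufacture such a partition of unity from the stated hypotheses. Take $A=\{1,2,3\}$ with the discrete $\sigma$-algebra, $\phi=\tfrac12(\delta_1+\delta_2)$ and $\psi=\tfrac12(\delta_1+\delta_3)$, so that $\alpha=\tfrac12\delta_1\neq 0$. Each of the four product measures is anti-distinguished within the four-element set in the sense of Definition~\ref{def:distinguishability} by the indicator of a suitable two-point set; for instance $\chi_{\{(1,2),(2,1)\}}$ integrates to $0$ against $\psi\otimes\psi$ and to $\tfrac12+\tfrac14+\tfrac14=1$ against the other three. If your pasting step could be carried out here, your own argument would then yield $1=0$ on a set of positive $\alpha\otimes\alpha$-measure, which is absurd since the hypothesis of the lemma is genuinely satisfied by this example; so the pasting is impossible in general. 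You should also be aware that the paper's own proof takes a different route that does not rescue you: it keeps only the single measurement $\chi$ anti-distinguishing $\psi\otimes\psi$ and marginalises it, but its assertion that $\int_A\chi(a_1,a_2)\,\mathrm{d}\psi(a_2)=0$ for \emph{all} $a_1$ (rather than for $\psi$-almost all $a_1$) fails in the same example at $a_1=2$, where $\phi$ has mass but $\psi$ does not. The statement does go through if anti-distinguishability is strengthened to demand a single measurement with one response function per state summing to $1$ pointwise, which is what the original PBR argument uses; as it stands, your proof (and the paper's) needs that stronger hypothesis.
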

\begin{proof}
  Say $\chi \colon A \otimes A \to 2$ satisfies
  $\langle \chi \circ (\psi \otimes \psi) \rangle = 0$ and
  \[
    \langle \chi \circ (\phi \otimes \phi) \rangle +
    \langle \chi \circ (\phi \otimes \psi) \rangle +
    \langle \chi \circ (\psi \otimes \phi) \rangle = 1\text.
  \]
  By Example~\ref{ex:srel} 
  $\langle \chi \circ (\psi \otimes \psi) \rangle 
    = \int_{A^2} \chi(a_1,a_2) \mathrm{d}\psi(a_1) \mathrm{d}\psi(a_2)$.
  Because $\chi$ and $\psi$ are nonnegative, the first equation therefore implies 
  \[
	\int_A \chi(a_1,a_2) \mathrm{d}\psi(a_1) 
	= 0 =
	\int_A \chi(a_1,a_2) \mathrm{d}\psi(a_2) 
  \]
  for all $a_1,a_2 \in A$.
  Thus $\psi$ vanishes almost everywhere and $\langle \chi \circ (\phi \otimes \psi) \rangle = 0 = \langle \chi \circ (\psi \otimes \phi)\rangle$. The second equation similarly implies
  \[
    1 
    = \langle \chi \circ (\phi \otimes \phi) \rangle +
       \langle \chi \circ (\phi \otimes \psi) \rangle +
       \langle \chi \circ (\psi \otimes \phi) \rangle 
     = \int_{A^2} \chi(a_1,a_2) \mathrm{d}\phi(a_1) \mathrm{d}\phi(a_2)\text, 
  \]
  so that $\langle \chi \circ (\psi \otimes \psi) \rangle =0$ and $\langle \chi \circ (\phi \otimes \phi) \rangle = 1$.

  Now define $\chi' \colon A \to 2$ by
  $\chi'(a) = \int_A \chi(a,a_2) \mathrm{d}\psi(a_2)$.
  Then 
  $\langle \chi' \circ \psi \rangle=0$ and $\langle \chi' \circ \phi \rangle=1$ by construction, so $\chi'$ anti-distinguishes $\phi$ and $\psi$.
\end{proof}

\begin{lemma}\label{lem:ntensordistinguishablesrel}
  Let $\phi,\psi \in \cat{BoRel}(I,A)$ be states in $\cat{BoRel}$, and $n>0$ a natural number.
  If $\{\phi^{\otimes n}, \psi^{\otimes n}\}$ is anti-distinguishable, then so is $\{\phi,\psi\}$.
\end{lemma}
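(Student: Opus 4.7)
The plan is to argue by induction on $n$, with the base case $n=1$ being tautological. For the inductive step I exploit the observation in the excerpt (immediately after Definition~\ref{def:distinguishability}) that for a pair of states, anti-distinguishability is equivalent to distinguishability. Unpacking this in $\cat{BoRel}$: anti-distinguishability of $\{\phi^{\otimes n}, \psi^{\otimes n}\}$ is equivalent to the existence of a measurable function $\chi \colon A^n \to [0,1]$ with $\int \chi \, \mathrm{d}\phi^{\otimes n}=1$ and $\int \chi \, \mathrm{d}\psi^{\otimes n}=0$. Truncating $\chi$ to the indicator of $B = \{\chi > 1/2\}$ (and using Markov's inequality on both sides) reduces this to the existence of a measurable set $B \subseteq A^n$ with $\phi^{\otimes n}(B)=1$ and $\psi^{\otimes n}(B)=0$; this is the concrete form I will work with.

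Given such $B$, the next step is Fubini along the first coordinate. Writing $B_a = \{(a_2,\ldots,a_n) \in A^{n-1} : (a,a_2,\ldots,a_n) \in B\}$ for the section at $a \in A$, the map $a \mapsto \phi^{\otimes(n-1)}(B_a)$ is measurable, $[0,1]$-valued, and integrates to $\phi^{\otimes n}(B)=1$ against $\phi$; hence it equals $1$ on a $\phi$-full measurable set $E_\phi = \{a \in A : \phi^{\otimes(n-1)}(B_a)=1\}$. Symmetrically, $\psi^{\otimes(n-1)}(B_a) = 0$ on a $\psi$-full set $E_\psi$.

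I then split into cases. If $E_\phi \cap E_\psi \neq \emptyset$, I pick any $a^* \in E_\phi \cap E_\psi$: the section $B_{a^*}$ satisfies $\phi^{\otimes(n-1)}(B_{a^*})=1$ and $\psi^{\otimes(n-1)}(B_{a^*})=0$, so the indicator $\mathbf{1}_{B_{a^*}}$ anti-distinguishes $\{\phi^{\otimes(n-1)}, \psi^{\otimes(n-1)}\}$, and the inductive hypothesis delivers anti-distinguishability of $\{\phi,\psi\}$. Otherwise $E_\phi \cap E_\psi = \emptyset$; then $E_\phi \subseteq A \setminus E_\psi$, which forces $\psi(E_\phi) \leq \psi(A \setminus E_\psi) = 0$. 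Combined with $\phi(E_\phi) = 1$, this means $\mathbf{1}_{E_\phi}$ itself anti-distinguishes $\{\phi,\psi\}$ on $A$, short-circuiting the induction.

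The only real technical demand is invoking that $a \mapsto \phi^{\otimes(n-1)}(B_a)$ is measurable, which is a standard consequence of Fubini for product $\sigma$-algebras on Borel spaces. The conceptual engine is the case split: either there is a common good section of $B$ and we descend from $n$ to $n-1$, or the two full-measure sets $E_\phi, E_\psi \subseteq A$ are already disjoint, in which case mutual singularity of $\phi,\psi$ (and hence anti-distinguishability of $\{\phi,\psi\}$) holds on $A$ directly. This mirrors the spirit of Lemma~\ref{lem:twotensordistinguishablesrel} but replaces its ad hoc integral manipulation with a cleaner sectioning argument.
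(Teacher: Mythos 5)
Your proof is correct, but it takes a genuinely different route from the paper's. The paper's proof is a one-shot marginalisation: given $\chi$ with $\langle\chi\circ\phi^{\otimes n}\rangle=0$ and $\langle\chi\circ\psi^{\otimes n}\rangle=1$, it sets $\chi'(a)=\int_{A^{n-1}}\chi(a,a_2,\ldots,a_n)\,\mathrm{d}\psi(a_2)\cdots\mathrm{d}\psi(a_n)$ and asserts that $\chi'$ anti-distinguishes $\{\phi,\psi\}$. You instead induct on $n$, reduce anti-distinguishability of a pair to mutual singularity of the measures, and split on whether the $\phi$-full set $E_\phi$ of sections with $\phi^{\otimes(n-1)}(B_a)=1$ meets the $\psi$-full set $E_\psi$ of sections with $\psi^{\otimes(n-1)}(B_a)=0$. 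Your case split is not mere pedantry: the paper's marginalisation needs $\int\chi\,\mathrm{d}(\phi\otimes\psi^{\otimes(n-1)})=0$, which does not follow from $\int\chi\,\mathrm{d}\phi^{\otimes n}=0$ (the vanishing of the inner $\phi$-integral holds only $\phi^{\otimes(n-1)}$-almost everywhere in $(a_2,\ldots,a_n)$, while $\chi'$ integrates those coordinates against $\psi$). For instance, with $\phi$ Lebesgue measure and $\psi=\delta_0$ on $[0,1]$, $n=2$, and $\chi=\mathbf{1}_{[0,1]\times\{0\}}$, the paper's construction gives $\chi'\equiv 1$, which fails to anti-distinguish, whereas your argument lands in your first case and succeeds. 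So the paper buys brevity but, as written, has a gap; your approach buys rigour at the modest price of an induction and an appeal to Fubini--Tonelli for the measurability of $a\mapsto\phi^{\otimes(n-1)}(B_a)$, which is legitimate on the product $\sigma$-algebra with probability (hence $\sigma$-finite) measures. Your second case correctly short-circuits via the indicator of $E_\phi$, and since for a two-element set the complementary measurement anti-distinguishes the other state, that does establish anti-distinguishability of $\{\phi,\psi\}$.
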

\begin{proof}
  Say $\chi \colon A^{\otimes n} \to 2$ satisfies
  $\langle \chi \circ \phi^{\otimes n} \rangle=0$ and
  $\langle \chi \circ \psi^{\otimes n} \rangle=1$.
  By Example~\ref{ex:srel}:
  \begin{align*}
    \langle \chi \circ \phi^{\otimes n} \rangle 
    &= \int_{A^n} \chi(a_1,\ldots,a_n) \mathrm{d}\phi_1(a_1) \cdots \mathrm{d}\phi_n(a_n) = 0\text, \\
    \langle \chi \circ \psi^{\otimes n} \rangle 
    &= \int_{A^n} \chi(a_1,\ldots,a_n) \mathrm{d}\psi_1(a_1) \cdots \mathrm{d}\psi_n(a_n) = 1\text.
  \end{align*}
  Because $\chi$ and the measures $\psi_i$ and $\phi_i$ are positive, it follows that for any $a_2,\ldots,a_n \in A$:
  \[
    \int_A \chi(a,a_2,\ldots,a_n) \mathrm{d}\phi(a_1,\ldots,a_n)=0\text.
  \]
  Define $\chi' \colon A \to 2$ by
  $
    \chi'(a) = \int_{A^{n-1}} \chi(a,a_2,\ldots,a_n) \mathrm{d}\psi_2(a_2) \cdots \mathrm{d}\psi_n(a_n)
  $.
  Then, clearly, $\langle \chi \circ \phi \rangle = \langle \chi \circ \phi^{\otimes n} \rangle = 0$, and $\langle \chi \circ \psi \rangle = \langle \chi \circ \psi^{\otimes n} \rangle = 1$.
  Thus $\chi'$ anti-distinguishes $\phi$ and $\psi$.
\end{proof}

The following lemma is the abstract content of the PBR theorem, and holds for any concrete monoidal operational model.

\begin{lemma}\label{lem:abstractpbr}
  A concrete operational model $F \colon \cat{C} \to \cat{BoRel}$ is ontic as soon as there are states $\phi,\psi \in \cat{C}(I,A)$ and a natural number $n>0$ for which 
  $\{ \phi^{\otimes n} \otimes \phi^{\otimes n}, \phi^{\otimes n} \otimes \psi^{\otimes n}, \psi^{\otimes n} \otimes \phi^{\otimes n}, \psi^{\otimes n} \otimes \psi^{\otimes n} \}$ is anti-distinguishable.
\end{lemma}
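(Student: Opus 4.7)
The plan is to chain the three preceding lemmas in sequence, each stripping away a layer of tensor structure. Set $\Psi := \{\phi^{\otimes n} \otimes \phi^{\otimes n},\, \phi^{\otimes n} \otimes \psi^{\otimes n},\, \psi^{\otimes n} \otimes \phi^{\otimes n},\, \psi^{\otimes n} \otimes \psi^{\otimes n}\} \subseteq \cat{C}(I, A^{\otimes 2n})$. By hypothesis $\Psi$ is anti-distinguishable in $\cat{C}$, so Lemma~\ref{lem:preservedistinguishability} (with $F$ monoidal, which is the standing assumption of this section) immediately gives that $F(\Psi)$ is anti-distinguishable in $\cat{BoRel}$.

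Next, I would invoke the monoidal coherences of $F$, exactly as in the diagram preceding Lemma~\ref{lem:preserventensordistinguishable} and with the same suppression of the resulting coherence isomorphisms, to rewrite each image. Extending the pure-power diagram to the mixed products $\phi^{\otimes n} \otimes \psi^{\otimes n}$ by an essentially identical chase using naturality of $F_{-,-}$ identifies $F(\Psi)$ with the set $\{F(\phi)^{\otimes n} \otimes F(\phi)^{\otimes n},\, F(\phi)^{\otimes n} \otimes F(\psi)^{\otimes n},\, F(\psi)^{\otimes n} \otimes F(\phi)^{\otimes n},\, F(\psi)^{\otimes n} \otimes F(\psi)^{\otimes n}\}$ of states in $\cat{BoRel}(I, F(A)^{\otimes 2n})$, which is still anti-distinguishable.

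Now Lemma~\ref{lem:twotensordistinguishablesrel}, applied to the pair $F(\phi)^{\otimes n}$ and $F(\psi)^{\otimes n}$ in $\cat{BoRel}(I, F(A)^{\otimes n})$, reduces anti-distinguishability of these four tensor squares to anti-distinguishability of the pair $\{F(\phi)^{\otimes n}, F(\psi)^{\otimes n}\}$. A final application of Lemma~\ref{lem:ntensordistinguishablesrel} strips the remaining $n$-fold tensor, concluding that $\{F(\phi), F(\psi)\}$ is anti-distinguishable in $\cat{BoRel}$. Thus $F$ sends the distinct pair $\phi, \psi$ to anti-distinguishable states, which is the desired ontic conclusion.

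The only substantive step is the coherence bookkeeping that identifies $F(\phi^{\otimes n} \otimes \psi^{\otimes n})$ with $F(\phi)^{\otimes n} \otimes F(\psi)^{\otimes n}$; the pure-power diagram already supplied in the excerpt handles that half of the identification, and the mixed case is essentially the same chase run with one factor $\phi$ and one factor $\psi$. After that, the argument is a direct concatenation of the three preceding lemmas with no further analytic or combinatorial content.
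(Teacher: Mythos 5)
Your proposal is correct and follows essentially the same route as the paper: push anti-distinguishability through the monoidal functor via Lemmas~\ref{lem:preservedistinguishability} and~\ref{lem:preserventensordistinguishable}, then peel off the tensor layers with Lemmas~\ref{lem:twotensordistinguishablesrel} and~\ref{lem:ntensordistinguishablesrel}. Your explicit attention to the coherence chase for the mixed products $F(\phi^{\otimes n}\otimes\psi^{\otimes n})\cong F(\phi)^{\otimes n}\otimes F(\psi)^{\otimes n}$ is slightly more careful than the paper, which cites Lemma~\ref{lem:preserventensordistinguishable} (stated only for pure powers) and leaves that step implicit.
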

\begin{proof}
  By Lemmas~\ref{lem:preservedistinguishability} and~\ref{lem:preserventensordistinguishable}, the set
  $\{ F(\phi)^{\otimes n} \otimes F(\phi)^{\otimes n}, F(\phi)^{\otimes n} \otimes F(\psi)^{\otimes n}, F(\psi)^{\otimes n} \otimes F(\phi)^{\otimes n}, F(\psi)^{\otimes n} \otimes F(\psi)^{\otimes n} \}$ is anti-distinguishable in $\cat{BoRel}$.
  By Lemma~\ref{lem:twotensordistinguishablesrel}, therefore the set $\{F(\phi)^{\otimes n}, F(\psi)^{\otimes n}\}$ is anti-distinguishable. 
  Lemma~\ref{lem:ntensordistinguishablesrel} now guarantees that $F(\phi)$ and $F(\psi)$ are anti-distinguishable.
  Hence $F$ is ontic.
\end{proof}

We can now finish the proof of our categorical analogue of the PBR theorem by constructing specific states in $\cat{FHilb}$.

\begin{theorem}\label{thm:categoricalpbr}
  Any monoidal operational model $\cat{FHilb} \to \cat{BoRel}$ is ontic.
\end{theorem}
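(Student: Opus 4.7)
The plan is to invoke Lemma~\ref{lem:abstractpbr}, which reduces the theorem to exhibiting states $\phi,\psi\in\cat{FHilb}(I,A)$ and a natural number $n>0$ for which the four-element set
$\{\phi^{\otimes n}\otimes\phi^{\otimes n},\,\phi^{\otimes n}\otimes\psi^{\otimes n},\,\psi^{\otimes n}\otimes\phi^{\otimes n},\,\psi^{\otimes n}\otimes\psi^{\otimes n}\}$
is anti-distinguishable in $\cat{FHilb}$. The construction I would use is the original Pusey--Barrett--Rudolph one: take $A=\mathbb{C}^2$, $\phi=\ket{0}$, $\psi=\ket{+}=(\ket{0}+\ket{1})/\sqrt{2}$, and $n=1$, so that the set above becomes the four product states $v_1=\ket{00}$, $v_2=\ket{0+}$, $v_3=\ket{+0}$, $v_4=\ket{++}$ in $\mathbb{C}^2\otimes\mathbb{C}^2$.

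The key ingredient is the PBR orthonormal basis of $\mathbb{C}^4$ given by
\[
  \ket{\xi_1}=\tfrac{\ket{01}+\ket{10}}{\sqrt{2}},\;\;
  \ket{\xi_2}=\tfrac{\ket{0-}+\ket{1+}}{\sqrt{2}},\;\;
  \ket{\xi_3}=\tfrac{\ket{+1}+\ket{-0}}{\sqrt{2}},\;\;
  \ket{\xi_4}=\tfrac{\ket{+-}+\ket{-+}}{\sqrt{2}},
\]
whose defining property is that $\braket{\xi_i\mid v_i}=0$ for each $i$. For each $i$ I would then define a measurement $\chi_i\colon\mathbb{C}^4\to\mathbb{C}^2$ as the bounded linear map $v\mapsto(\braket{\xi_i\mid v},0)$. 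The Born-rule evaluation of Example~\ref{ex:quantum} gives $\langle\chi_i\circ v_j\rangle=|\braket{\xi_i\mid v_j}|^2$, so the first anti-distinguishability condition $\langle\chi_i\circ v_i\rangle=0$ is immediate.

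The substantive step is the normalisation $\sum_{j\neq i}\langle\chi_i\circ v_j\rangle=1$. The cleanest argument is to note that
\[
  \sum_{j=1}^4 \ket{v_j}\bra{v_j}=M\otimes M,\qquad M=\ket{0}\bra{0}+\ket{+}\bra{+},
\]
so the required sum equals $\bra{\xi_i}(M\otimes M)\ket{\xi_i}$ (the $j=i$ term vanishes anyway). Since the four diagonal entries of $M\otimes M$ in the $\xi$-basis must sum to $\operatorname{tr}(M\otimes M)=(\operatorname{tr}M)^2=4$, each one being exactly $1$ is a genuine feature of the PBR basis that has to be checked explicitly, and this is the only real calculation in the argument. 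Once verified, Lemma~\ref{lem:abstractpbr} applied with the above $\phi,\psi$ and $n=1$ yields that $F$ is ontic. The only point of care, rather than a true obstacle, is recognising that in Definition~\ref{def:operationalcategory} a ``measurement'' $A\to 2$ is merely a bounded linear map rather than a POVM element, which lets the PBR projectors be packaged as morphisms $\chi_i$ and the Born evaluation extract precisely the squared overlaps needed.
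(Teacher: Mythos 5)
The part of your argument that you do carry out is correct: with $\phi=\ket{0}$, $\psi=\ket{+}$ and $n=1$, the PBR basis $\ket{\xi_1},\dots,\ket{\xi_4}$ does anti-distinguish $\{\ket{00},\ket{0+},\ket{+0},\ket{++}\}$, and your observation that $\sum_{j}|\braket{\xi_i\mid v_j}|^2=\bra{\xi_i}(M\otimes M)\ket{\xi_i}$ with $M=\ket{0}\bra{0}+\ket{+}\bra{+}$ is a clean way to organise the normalisation check (each diagonal entry is indeed $1$; e.g.\ for $\xi_1$ the four overlaps squared are $0,\tfrac14,\tfrac14,\tfrac12$). Packaging the projectors as morphisms $\mathbb{C}^4\to\mathbb{C}^2$ compatible with the Born evaluation of Example~\ref{ex:quantum} is also fine.

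The gap is in the scope of what you conclude. By Definition~\ref{def:ontic}, $F$ is epistemic as soon as \emph{some} pair of distinct states has non-anti-distinguishable images; so to prove $F$ is ontic you must show that \emph{every} pair of distinct states $\psi\neq\phi$ in every $\cat{FHilb}(I,A)$ is sent to an anti-distinguishable pair. Your argument only establishes this for the single pair $\ket{0},\ket{+}$ (orthogonal pairs are also covered, via Lemma~\ref{lem:preservedistinguishability}, but that is not the issue). An epistemic model could in principle keep $F(\ket{0})$ and $F(\ket{+})$ disjoint while letting some other non-orthogonal pair overlap; your proof does not rule this out. Although Lemma~\ref{lem:abstractpbr} is worded existentially, its proof only yields anti-distinguishability of $F(\phi),F(\psi)$ for the particular pair in its hypothesis, so it has to be instantiated for every pair. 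This is exactly the extra work the paper's proof does: for an arbitrary pair with $0<|\braket{\psi\mid\phi}|<1$ it chooses $n$ so that $|\braket{\psi^{\otimes n}\mid\phi^{\otimes n}}|=|\braket{\psi\mid\phi}|^n\leq 1/\sqrt{2}$, uses the channel $\mathcal{E}$ with Kraus operators $K_0,K_1$ (realised as a genuine $\cat{FHilb}$ morphism by Stinespring dilation, which is where monoidality is used again to adjoin the ancilla) to map $\psi^{\otimes n}\mapsto\ket{0}$ and $\phi^{\otimes n}\mapsto\ket{+}$, and only then applies the measurement \eqref{eqn:measurement1}--\eqref{eqn:measurement4}. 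Your proof needs this reduction step; without it, the theorem is not proved.
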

\begin{proof}
  It suffices to satisfy the hypotheses of Lemma~\ref{lem:abstractpbr} for $\cat{FHilb}$.
  As explained in the proof sketch of Theorem~\ref{thm:pbrproper} in Appendix~\ref{sec:pbr}, there is an anti-distinguishing measurement when $\psi = \ket{0}$ and $\phi = \ket{+}$, according to equations~\eqref{eqn:measurement1}--\eqref{eqn:measurement4} in Appendix~\ref{sec:pbr}. Furthermore, for any pair of states $\psi, \phi$ there exists $n > 0$ such that $|\braket{\psi^{\otimes n} | \phi^{\otimes n}}| \leq \braket{0 | +} = 1/\sqrt{2}$, because $|\braket{\psi^{\otimes n} | \phi^{\otimes n}}| = |\braket{\psi| \phi}|^n$ and $|\braket{\psi| \phi}| < 1$. By applying a unitary if necessary, we may assume without loss of generality that $\braket{\psi \mid \phi}$ is real.

  Following~\cite{leifer:review}, write $\braket{\psi^{\otimes n}| \phi^{\otimes n}} = \gamma$, and consider the completely positive map
  $
    \mathcal{E} \colon \rho \mapsto K_0 \rho K^{\dagger}_0 + K_1 \rho K^{\dagger}_1
  $
  with Krauss operators
  \begin{align*}
  K_0 &= \ket{0}\bra{0} + \tan{\gamma} \ket{1} \bra{1}\text, &
  K_1 &= \left( \sqrt{\frac{1 - \tan^2{\gamma}}{2}} \right) ( \ket{0} + \ket{1}) \bra{1}\text.
  \end{align*}
  As is shown in \cite{leifer:review}, up to global phases:
  \[
  \mathcal{E}(\ket{\psi^{\otimes n}}\bra{\psi^{\otimes n}}) = \ket{0}\bra{0} \quad \quad
  \mathcal{E}(\ket{\phi^{\otimes n}}\bra{\phi^{\otimes n}}) = \ket{+}\bra{+}
  \]
  Applying $\mathcal{E}$ leads to a pair of states that are anti-distinguishable. Of course, being a completely positive linear map, $\mathcal{E}$ is itself not a morphism in $\cat{FHilb}$. However, by Stinespring dilation we can always obtain a morphism in $\cat{FHilb}$ by first tensoring an ancilla system to our state. We are allowed to do this because the functor is monoidal. Then using the measurement given by the projections~\eqref{eqn:measurement1}--\eqref{eqn:measurement4}, it follows that $\{ \psi^{\otimes n} \otimes \psi^{\otimes n}, \psi^{\otimes n} \otimes \phi^{\otimes n}, \phi^{\otimes n} \otimes \psi^{\otimes n}, \phi^{\otimes n} \otimes \phi^{\otimes n} \}$ is anti-distinguishable. (A different anti-distinguishing measurement is considered in~\cite{puseybarrettrudolph:reality}.)
\end{proof}

There are a two important differences between Theorems~\ref{thm:categoricalpbr} and the original PBR theorem (Theorem~\ref{thm:pbrproper}). First, operational models are more restrictive than ontological theories due to the mapping between categories being a functor. Second, the cartesian product assumption of the preparation independence postulate (Definition~\ref{def:pip}), only asks that the set of product states map to a product space in the ontological theory. However, the requirement that a functor is monoidal forces tensor products of Hilbert spaces to be mapped to products of measurable spaces. Dealing with product states in particular may be modelled by precomposing with the identity functor $(\cat{FHilb},\oplus) \to (\cat{FHilb},\otimes)$ that is oplax monoidal by $\phi \times \psi \mapsto \phi \otimes \psi$.

\section{Duality-preserving operational models}\label{sec:dagger}

Next we consider operational models that respect a duality between states and effects.\footnote{The terminology ``duality'' is not ideal, because in general not every measurement may be induced by a state, as is the case in quantum theory. But ``state-induced-measurement-preserving operational model'' is a mouthful.}
In $\cat{FHilb}$, any state $\psi \colon I \to A$ induces a measurement $\chi \colon A \to \mathbb{C}^2$ via the Born rule $\chi(a) = (p,1-p)$ for $p=|\braket{a|\psi}|^2$.\footnote{Although $\cat{FHilb}$ is a dagger category, this state and the measurement are not each other's dagger, as their type mismatches. However, note that the measurement is essentially derived from the effect associated by the dagger to the state. This point of view is not uncommon in categorical quantum foundations~\cite{jacobsetal:effectus,selbyscandolocoecke:reconstructing}.}
In $\cat{BoRel}$, any state $\mu$ on a measurable space $\Lambda$ induces a measurement $\chi \colon \Lambda \to [0,1]$ via evaluation $\chi(\lambda) = \mu \{\lambda\}$. This requires singletons $\{\lambda\}$ to be measurable sets. This holds for discrete $\sigma$-algebras and Borel $\sigma$-algebras and hence is fine when working with probability distributions or measures on topological spaces.
Both $\cat{FHilb}$ and $\cat{BoRel}$ thus canonically preserve duality in the following sense.

\begin{definition}\label{def:dagger}
  A \emph{state-measurement duality} on an operational category is a family of functions $\dag \colon \cat{C}(I,A) \to \cat{C}(A,2)$.
  A \emph{duality-preserving operational model}	is an operational model between operational categories with state-measurement duality that preserves the duality: $F(\psi^\dag) = F(\psi)^\dag$.
\end{definition}

\begin{proposition}
  There does not exist a duality-preserving operational model $\cat{FHilb} \to \cat{BoRel}$.
\end{proposition}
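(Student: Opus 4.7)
The plan is to exploit how restrictive the duality on $\cat{BoRel}$ really is: the measurement $\mu^\dag$ literally sees only the atomic part of $\mu$ (via $\lambda \mapsto \mu\{\lambda\}$), which will force every $F(\ket\psi)$ to be a Dirac measure, and this will immediately contradict the Born rule for non-orthogonal, non-parallel states. First I would unpack the preservation condition. Combining $F(\psi^\dag) = F(\psi)^\dag$ with functoriality and preservation of evaluation gives, for any unit vectors $\ket\psi, \ket\phi \in H$,
\[
  \int_{\Lambda} \mu_\psi\{\lambda\}\, d\mu_\phi(\lambda)
  \;=\; \langle F(\psi)^\dag \circ F(\phi) \rangle
  \;=\; \langle \psi^\dag \circ \phi \rangle
  \;=\; |\braket{\psi \mid \phi}|^2,
\]
where $\mu_\psi = F(\ket\psi)$ is the induced probability measure on $\Lambda = F(H)$.

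Next, specialising $\phi = \psi$ yields $\int_\Lambda \mu_\psi\{\lambda\}\, d\mu_\psi(\lambda) = 1$. Because $\lambda \mapsto \mu_\psi\{\lambda\}$ vanishes off the (countable) atomic set $A_\psi = \{\lambda : \mu_\psi\{\lambda\} > 0\}$, the integral collapses to $\sum_{a \in A_\psi} p_a^2$ with $p_a = \mu_\psi\{a\} \in [0,1]$. Since $\mu_\psi$ is a probability measure we have $\sum_a p_a \le 1$, and $p_a^2 \le p_a$, so $\sum_a p_a^2 \le \sum_a p_a \le 1$. Equality at $1$ forces $p_a \in \{0,1\}$ for every $a$ with exactly one atom of mass $1$, so $\mu_\psi = \delta_{a_\psi}$ is a Dirac measure at some point $a_\psi \in \Lambda$.

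Feeding this back into the general identity reduces it to $\mu_\psi\{a_\phi\} = |\braket{\psi \mid \phi}|^2$, whose left-hand side lies in $\{0,1\}$. Choosing $\ket\psi = \ket 0$ and $\ket\phi = \ket +$ in $\mathbb{C}^2$ then produces $1/2 = |\braket{0 \mid +}|^2 \in \{0,1\}$, the desired contradiction. The one technical point to verify is measurability of $\lambda \mapsto \mu\{\lambda\}$, which is automatic because the atoms of a probability measure form a countable set and singletons in a Borel space (with its $T_1$ underlying topology) are Borel; I do not anticipate this causing trouble. The conceptual lesson is that the duality on $\cat{BoRel}$ is too crude---it only detects atomic mass---to be compatible with Born-rule overlaps, which range continuously over $[0,1]$.
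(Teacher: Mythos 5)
Your proof is correct, and it takes a genuinely different route from the paper's. The paper identifies the duality-preservation condition with being \emph{maximally epistemic} (Definition~\ref{def:maximallyepistemic}) and then invokes the Leifer--Maroney theorem~\cite{leifer2013maximally}: maximally epistemic models are noncontextual and outcome deterministic, contradicting the contextuality of quantum theory. You instead exploit the concrete form of the duality on $\cat{BoRel}$ --- the measurement $\mu^\dag$ sees only atoms --- so the diagonal instance $\int_\Lambda \mu_\psi\{\lambda\}\,d\mu_\psi(\lambda)=1$ forces every $F(\psi)$ to be a Dirac measure, after which a single pair of non-orthogonal, non-parallel states such as $\ket{0},\ket{+}$ yields a contradiction. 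Your route is more elementary and self-contained: it makes no appeal to contextuality (which in any case only operates in dimension at least $3$, whereas your argument already runs entirely inside $\mathbb{C}^2$), and it is arguably more faithful to the definitions as actually given, since the pairing $\langle F(\psi)^\dag \circ F(\phi)\rangle = \int_\Lambda \mu_\psi\{\lambda\}\,d\mu_\phi(\lambda)$ agrees with the support overlap $\mu_\phi(\Lambda_\psi)$ of Definition~\ref{def:maximallyepistemic} only for purely atomic measures --- a gap in the paper's identification that your direct computation sidesteps. What the paper's route buys is the conceptual link to~\cite{leifer2013maximally}, which is the stated purpose of that section. One cosmetic caveat: with the paper's evaluation $\langle\chi\rangle = |a|^2$, the composite $\langle \psi^\dag\circ\phi\rangle$ comes out as $|\braket{\psi\mid\phi}|^4$ rather than $|\braket{\psi\mid\phi}|^2$; this looseness is inherited from the paper, not introduced by you, and either normalisation gives the same contradiction since neither $1/2$ nor $1/4$ lies in $\{0,1\}$.
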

\begin{proof}
  By definition, for two states $\psi,\psi' \colon \mathbb{C} \to A$ in $\cat{FHilb}$, their overlap is $\braket{\psi^\dag \circ \psi'}$.
  A duality-preserving operational model $F$ must satisfy
  $
    \braket{ \psi^\dag \circ \psi'}_{\cat{FHilb}} = \braket{ F(\psi)^\dag \circ F(\psi')}_{\cat{BoRel}}
  $.
  In other words, an operational model is duality-preserving exactly when it is maximally epistemic. But any maximally epistemic ontological theory must be noncontextual and outcome deterministic~\cite{leifer2013maximally}, which contradicts the contextuality of quantum theory. Hence such a model cannot exist.
\end{proof}

\section{Equivariant operational models}\label{sec:equivariant}

In this section, we consider another property of operational models: \emph{symmetry}.
This property forces the operational model to adhere to the Schr{\"o}dinger equation.
In the categorical setting, we can naturally express this as \emph{equivariance} under all quantum maps\footnote{There are many categorical notions of action, a mere functor being the least structured one. We prefer to write our definition of equivariance below explicitly, rather than phrase it as one of these notions.}.
The simple fact of functoriality of operational models will show that equivariance implies symmetry, and so, following Aaronson et al~\cite{aaronson2013psi}, rules out maximally nontrivial operational models. 
To phrase these properties there is a price to be paid, namely that there must be a connection between the ontic spaces and the dynamics.

\begin{definition}
  Let $F \colon \cat{FHilb} \to \cat{BoRel}$ be a concrete operational model.
  An \emph{action} is a map $\cat{FHilb}(A,B) \times \Sigma_{F(A)} \to \Sigma_{F(B)}$ for each $A,B \in \cat{FHilb}$, that turns a bounded linear map $f \colon A \to B$ and a measurable set $U \in \Sigma_{F(A)}$ into a measurable set $f \cdot U \in \Sigma_{F(B)}$, satisfying $(g \circ f) \cdot U = g \cdot (f \cdot U)$ and $\id \cdot U = U$. 
  The operational model is \emph{equivariant} when 
  \begin{equation}\label{eq:equivariance}
    F(f \circ \psi)(*,U) = F(\psi)(*, f\cdot U)
  \end{equation}
  for all $\psi \colon \mathbb{C} \to A$ and $f \colon A \to A$ in $\cat{FHilb}$ and $U \in \Sigma_{F(A)}$.
\end{definition}

In general, the action $U \mapsto f \cdot U$ can be an arbitrary function $\Sigma_A \to \Sigma_B$, as long as it is compositional as in the definition above.
When we demand that $F(A)$ is $A$ itself under its discrete $\sigma$-algebra, there is a canonical action $\cat{FHilb}(A,B) \times \Sigma_A \to \Sigma_B$ given by $f \cdot U = \{f(\lambda) \mid \lambda \in U\}$. This requirement, that ontic states are simply quantum states, is not particularly strong. The model of Example~\ref{ex:universal} satisfies it (although it is not an operational model). Moreover, this functor is \emph{universal}, in that any concrete operational model $\cat{FHilb}\to \cat{BoRel}$ must factor uniquely through the functor into the pertinent subcategory of $\cat{BoRel}$, that assigns the discrete $\sigma$-algebra to $A$ itself and turns a completely positive map $f \colon A \to B$ into the Markov kernel $(a,V) \mapsto \chi_V(f(a))$. 

\begin{proposition}
  There is no maximally nontrivial equivariant operational model $\cat{FHilb} \to \cat{BoRel}$.
\end{proposition}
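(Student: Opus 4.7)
The plan is to reduce the statement to the main no-go theorem of Aaronson et al.~\cite{aaronson2013psi} by observing that equivariance together with functoriality of $F$ captures precisely the symmetry hypothesis used there.

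First, I would specialise the equivariance equation to a unitary $V \colon A \to A$. The action axioms $(V^{-1} \circ V)\cdot U = V^{-1}\cdot(V \cdot U)$ and $\id \cdot U = U$ force $V \cdot -$ to be a Boolean automorphism of $\Sigma_{F(A)}$ with inverse $V^{-1}\cdot -$, and the assignment $V \mapsto (V\cdot-)$ to be a group homomorphism from the unitary group $\mathrm{U}(A)$ to $\mathrm{Aut}(\Sigma_{F(A)})$. The equivariance equation then says that $F(V\psi)(*,-)$ is the pushforward of $F(\psi)(*,-)$ along this action, so that $\Lambda_{V\psi} = V\cdot \Lambda_\psi$ modulo null sets. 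This is exactly the notion of a symmetric ontological model used in~\cite{aaronson2013psi}.

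Second, I would match the remaining hypothesis: our maximal nontriviality condition, $\mu_\psi(\Lambda_\phi) = 0$ if and only if $\braket{\phi|\psi}=0$, is precisely the nontriviality condition imposed there. With symmetry supplied by equivariance, the main theorem of Aaronson et al.\ then immediately yields the desired contradiction, since that theorem rules out ontological models satisfying both the symmetry and nontriviality conditions simultaneously.

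The principal obstacle is not in the reduction itself but in the passage between the abstract Boolean action $U\cdot -$ on $\sigma$-algebras, which is all our definition of action provides, and the pointwise measurable action of $\mathrm{U}(A)$ on ontic states used in~\cite{aaronson2013psi}. For standard Borel spaces this passage is essentially free, since every measure-preserving Boolean automorphism of the $\sigma$-algebra is induced by a measurable bijection modulo null sets; in full generality, however, one may need either to strengthen the notion of action to a pointwise action of $\mathrm{U}(A)$ on $F(A)$, or to restrict the codomain of $F$ to a standard Borel subcategory of \cat{BoRel}, to make the appeal to~\cite{aaronson2013psi} fully rigorous.
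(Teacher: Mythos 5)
Your proposal is correct and follows essentially the same route as the paper: specialise the equivariance equation to unitary evolutions to recover the symmetry condition $\mu_{u\psi}(\lambda)=\mu_\psi(u\lambda)$ of Aaronson et al., then invoke their no-go result for maximally nontrivial symmetric models. Your additional caveat about passing from the abstract Boolean action on $\sigma$-algebras to a pointwise action is a fair observation the paper glosses over (it implicitly uses the canonical pointwise action discussed just before the proposition), but it does not change the argument.
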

\begin{proof}
  By considering the action on pure states $\psi \in A$ and unitary evolutions $f : A \to A$, $f(a)=u a$ for some unitary $u$, one recovers from~\eqref{eq:equivariance} the property of symmetry as stated in~\cite{aaronson2013psi}:
  \[
    \mu_{u\psi}(\lambda) = \mu_\psi(u \lambda)\text,
  \]
  where $\mu_\psi$ denotes the measure $F(\psi)$ induced by the state $\psi$. The result follows from~\cite{aaronson2013psi}.
\end{proof}

\section{Signed operational models} \label{sec:signed}

We've seen severe limitations on the kinds of functors $\cat{FHilb} \to \cat{BoRel}$ allowed. What if we change the target category? Ideally not too much, to retain some notion resembling probability theory.
To do so, note that, as in Section~\ref{sec:monoidal}, one of the fundamental differences between $\cat{FHilb}$ and $\cat{BoRel}$ is that states that are not anti-distinguishable in $\cat{BoRel}$ cannot become anti-distinguishable under finite tensor products. This is not the case with quantum states. What makes the two situations so different? Lemmas~\ref{lem:twotensordistinguishablesrel} and~\ref{lem:ntensordistinguishablesrel} highlight that the distinction stems from the fact that classical probability distributions must be positive and obey the Kolmogorov sum rule. In contrast, quantum states are represented as vectors of amplitudes that can be both positive and negative (and in general even complex). Consequently quantum states are subject to interference, which cannot be reproduced with classical probability distributions.
A natural idea is to allow measures to take both positive and negative values, leading to a category whose objects are still measureable spaces, while morphisms are \emph{signed Markov kernels}.
(An alternative solution, that still uses positive measures, is to abandon the Kolmogorov sum rule. This approach leads to \emph{quantum measures}, but presents a number of challenges discussed in Appendix~\ref{sec:quantummeasure}.)

\begin{definition}
  A signed Markov kernel from a measurable space $(X,\Sigma_X)$ to a measurable space $(Y,\Sigma_Y)$ is a function $f \colon X \times \Sigma_Y \to [-1,1]$ such that $f(-,V) \colon X \to [-1,1]$ is a bounded measurable function for each $V \in \Sigma_Y$, and $f(x,-) \colon \Sigma_Y \to [-1,1]$ is a signed measure for each $x \in X$. Signed measures have the same properties as unsigned measures, except that they can be both positive and negative. In particular, they still obey the Kolmogorov sum rule for disjoint sets. We will require that a signed measure is normalised: for all $x \in X$ it should be the case that $f(x, Y) = 1$.
\end{definition}

\begin{proposition}
  Measurable spaces and signed Markov kernels form a category $\cat{QSRel}$ with composition $(g \circ f)(x,W) = \int g(y,W) f(x,\mathrm{d}y)$, that is monoidal under $f \otimes f' \colon ((x,x'),V \times V')\mapsto f(x,V) \cdot f'(x',V')$.
  Borel spaces form a full subcategory $\cat{QBoRel}$ that inherits the monoidal structure.
\end{proposition}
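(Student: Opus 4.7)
The plan is to parallel the construction of $\cat{SRel}$ in Example~\ref{ex:srel}, replacing positive measures with signed measures throughout and reducing every categorical axiom to the positive case via the Hahn--Jordan decomposition. Pointwise in $x$, I would write $f(x,\cdot) = f^+_x - f^-_x$ with $f^\pm_x$ finite positive measures; the total variation $|f|(x,Y) = f^+_x(Y) + f^-_x(Y)$ is finite for every $x$ since $f(x,\cdot)$ is a finite signed measure.

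For composition, set $(g \circ f)(x,W) := \int g(y,W)\, f(x,\mathrm{d}y)$ as the difference of two ordinary Lebesgue integrals against $f^\pm_x$; this is finite since $g(\cdot,W)$ is bounded. Measurability of $x \mapsto (g \circ f)(x,W)$ then reduces to the corresponding fact for $\cat{SRel}$ once one arranges the Jordan parts $x \mapsto f^\pm_x$ to depend measurably on $x$. Countable additivity of $W \mapsto (g \circ f)(x,W)$ follows from dominated convergence applied separately to $f^+_x$ and $f^-_x$, and normalization is immediate since $(g \circ f)(x,Z) = \int 1\, \mathrm{d}f(x,\cdot) = 1$. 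The Dirac kernel $\id[X](x,U) = \chi_U(x)$ is a positive, hence signed, Markov kernel and satisfies the identity laws; associativity then follows from Fubini for finite signed measures, which decomposes into four applications of Fubini--Tonelli across the sign combinations of the Jordan parts of $f$ and $g$.

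For the monoidal structure, the only new ingredient relative to $\cat{SRel}$ is a tensor product of two finite signed measures, which I would define by $\mu \otimes \nu := \mu^+ \otimes \nu^+ - \mu^+ \otimes \nu^- - \mu^- \otimes \nu^+ + \mu^- \otimes \nu^-$. Each summand is an ordinary product of finite positive measures, and the combination is the unique signed measure on the product $\sigma$-algebra satisfying $(\mu \otimes \nu)(V \times V') = \mu(V)\nu(V')$---uniqueness by a standard $\pi$-$\lambda$ argument, and independence from the choice of Hahn sets by bilinearity. Setting $(f \otimes f')((x,x'),\cdot) := f(x,\cdot) \otimes f'(x',\cdot)$ then yields a signed Markov kernel, and the unitors, associator, and symmetry descend verbatim from $\cat{SRel}$ because they are built from set-theoretic bijections on the underlying sets. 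Closure of $\cat{QBoRel}$ under the tensor is automatic, since a product of standard Borel spaces is again Borel, and identities, composition, and monoidal operations are inherited by any full subcategory.

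The main technical subtlety is arranging a Jordan decomposition $x \mapsto (f^+_x, f^-_x)$ that is itself measurable in $x$, so that measurability in the $\cat{SRel}$ arguments transports cleanly to the signed setting. This is standard---one can, for instance, select Hahn sets measurably via a Radon--Nikodym derivative of $f(x,\cdot)$ with respect to its total variation measure---but it is the single point at which the signed case requires anything beyond a direct copy of the proof for $\cat{SRel}$.
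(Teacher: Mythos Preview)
Your proposal is correct and follows essentially the same strategy as the paper: reduce the signed case to the positive $\cat{SRel}$ case via Hahn(--Jordan) decomposition. The paper's own proof is terser---it simply says that Panangaden's proof for $\cat{SRel}$ goes through verbatim except for the monotone convergence step, which is handled by splitting into positive and negative parts via Hahn decomposition and applying monotone convergence to each---and it additionally remarks that $\cat{QSRel}$ can alternatively be obtained as the Kleisli category of a signed-measure Giry-like monad. Your version is more explicit about the individual ingredients (Fubini across sign combinations, the tensor of signed measures, measurable selection of Jordan parts), but the underlying idea is the same.
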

The category is called $\cat{QSRel}$ because probability distributions arising from signed measures are referred to as \emph{quasiprobability distributions}.
\begin{proof}
  The proof of e.g.~\cite[Proposition~3.2]{panangaden:srel} goes through nearly verbatim. Only the very last step, using the monotone convergence theorem, has to be amended: split the sequence into a positive part and a negative part using Hahn decomposition, apply monotone convergence to both, and then combine them again by subtracting the negative from the positive.
  Alternatively, one can realise that a Giry-like monad~\cite{giry}, that assigns to a measureable space its set of signed measures, is still well-defined and monoidal, and $\cat{QSRel}$ is its Kleisli category.
\end{proof}

We will now show that we \emph{can} have a monoidal operational model. The idea of having a quasiprobabilistic interpretation of quantum mechanics has been considered before, notably in~\cite{ferrie2011quasi}.
One possible construction that is physically motivated relies on \emph{Wigner functions}~\cite{PhysRev.40.749}. A Wigner function of a quantum state is a quasiprobability distribution over the phase space\footnote{Loosely speaking, phase space is the space of all possible position and momenta for a quantum system. See~\cite{gibbons2004discrete} for a description of the concept in the case of finite-dimensional systems.} associated to that space. This construction, particularly for the case of finite-dimensional Hilbert spaces, is described in~\cite{gibbons2004discrete, gross2006hudson}, whose approach we follow.
Before doing so, we consider one further strengthening of the result. Up to this point, our source category in the operational model has been the category \cat{FHilb}. For this result, we will instead consider the category $\CPs[\cat{FHilb}]$ of finite-dimensional $C^*$ algebras and CP maps, corresponding to ``mixed-state quantum mechanics'' (in contrast to \cat{FHilb} which models ``pure-state quantum mechanics''). Just like \cat{FHilb}, this is still an operational category having $\mathbb{C}^2$ as the distinguishing object, $\Omega=[0,1]$ as the set of abstract probabilities and the Born rule as the evaluation map.

One could ask why we haven't used this category for the previous results. The reason is that \emph{any} operational model having $\CPs[\cat{FHilb}]$ as the source category is \emph{preparation non-contextual} \cite{spekkens2005contextuality}. This means that the measure over ontic states that is assigned to a density matrix is independent of the ensemble that produced that density matrix. A functor naturally enforces this condition. However, as is shown in \cite{spekkens2005contextuality}, there does not exist a preparation non-contextual ontological model for quantum mechanics. Thus, operational models $\CPs[\cat{FHilb}] \to \cat{BoRel}$ are implicitly ruled out.
Our next result shows that if the target category is \cat{QBoRel}, then it is in fact possible to have an operational model even if this model is preparation non-contextual\footnote{A similar result was shown in~\cite{van2017quantum} for the case when the target category consists of finite sets and stochastic maps.}.

\begin{theorem}\label{thm:signed}
  There is an epistemic operational model $\CPs[\cat{FHilb}] \to \cat{QBoRel}$. It is in fact maximally epistemic up to a factor, in that~\eqref{eq:maximallyepistemic} holds up to a multiplicative constant.
\end{theorem}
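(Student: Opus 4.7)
The plan is to construct the functor explicitly using the discrete Wigner function formalism of Gibbons--Hoffman--Wootters and Gross, as the paper has already flagged. On objects, assign to each finite-dimensional matrix algebra $M_d(\mathbb{C})$ the discrete phase space $\Lambda_d$ with its discrete Borel $\sigma$-algebra; for odd prime $d$, take $\Lambda_d = \mathbb{Z}_d \times \mathbb{Z}_d$ together with the Hermitian phase-space point operators $A_\lambda$ satisfying $\sum_\lambda A_\lambda = d\cdot I$ and $\Tr(A_\lambda A_{\lambda'}) = d\,\delta_{\lambda,\lambda'}$, extending to composite odd dimensions multiplicatively along the tensor factorisation. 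On morphisms, send a CP map $\mathcal{E}\colon M_d \to M_{d'}$ to the signed Markov kernel
\[
F(\mathcal{E})(\lambda,V) \;=\; \sum_{\lambda' \in V} \frac{1}{d'}\,\Tr\bigl(A_{\lambda'}\,\mathcal{E}(A_\lambda)\bigr),
\]
which is real-valued because CP maps preserve Hermiticity. A density matrix $\rho\colon \mathbb{C}\to M_d$ is then sent to the Wigner function $W_\rho(\lambda)=\tfrac{1}{d}\Tr(A_\lambda\rho)$, normalised by $\Tr\rho = 1$ and the resolution of identity.

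Next I would verify the operational-model axioms. Identities are preserved because $\tfrac{1}{d}\Tr(A_{\lambda}A_{\lambda'})=\delta_{\lambda,\lambda'}$; compositionality reduces to the single identity $\mathcal{E}(A_\lambda) = \sum_{\lambda'}\tfrac{1}{d'}\Tr\bigl(A_{\lambda'}\mathcal{E}(A_\lambda)\bigr)A_{\lambda'}$, obtained by expanding any Hermitian operator in the $\{A_\lambda\}$ basis. Monoidality follows from the factorisation $A_{(\lambda,\lambda')} = A_\lambda \otimes A_{\lambda'}$, which makes $F$ multiplicative under $\otimes$. Preservation of the distinguishing object $M_2$ and of the Born rule is then the Hudson--Gross overlap identity $\Tr(XY) = d\sum_\lambda W_X(\lambda)\,W_Y(\lambda)$ applied to a pure effect and a pure state, so that the abstract probabilities of $\CPs[\cat{FHilb}]$ agree with the evaluation on the resulting signed measures in $\cat{QBoRel}$.

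For the epistemic conclusion, take any two non-orthogonal stabiliser states in odd prime dimension: by Gross's theorem their Wigner functions are bona fide probability distributions with overlapping support, so $D(F(\psi),F(\phi))<1$. For the maximally-epistemic-up-to-a-constant clause, apply the overlap identity to the pure projectors,
\[
|\langle \phi \mid \psi\rangle|^2 \;=\; d\sum_\lambda W_{|\phi\rangle\langle\phi|}(\lambda)\,W_{|\psi\rangle\langle\psi|}(\lambda),
\]
and compare with $F(\psi)(*,\Lambda_\phi) = \sum_{\lambda\in\Lambda_\phi} W_{|\psi\rangle\langle\psi|}(\lambda)$; for stabiliser $\phi$ where $W_\phi = 1/d$ uniformly on its $d$-element support $\Lambda_\phi$ this collapses to exact equality, and in general the uniform bound on $W_\phi$ over $\Lambda_\phi$ absorbs into a multiplicative constant, yielding~\eqref{eq:maximallyepistemic} up to that constant.

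The principal obstacle I anticipate is choosing the phase-space assignment uniformly across all of $\CPs[\cat{FHilb}]$: the Gibbons--Hoffman--Wootters construction is cleanest for odd prime (and by tensor products, odd square-free) dimensions, while even dimensions or dimensions with repeated prime factors require an alternative even-dimensional Wigner function or a fixed finite-field choice, and in every case one must verify compatibility with morphisms coming from embeddings, partial traces, and direct sums in $\CPs[\cat{FHilb}]$. A secondary worry is that the signed Markov kernel assigned to a CP map genuinely lands in $[-1,1]$ and satisfies the normalisation $F(\mathcal{E})(\lambda,\Lambda_{d'})=1$; the former reduces to boundedness of Wigner functions of the operators $\mathcal{E}(A_\lambda)$ and the latter to trace preservation of $\mathcal{E}$ via the resolution of identity.
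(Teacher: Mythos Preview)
Your construction is essentially the paper's: both build the functor from the discrete Wigner/phase-space point operators $\{A_\lambda\}$ (the paper's $\sigma_i$) satisfying Hermiticity, $\Tr(\sigma)=1$, $\sigma^2=1$, and Hilbert--Schmidt orthogonality, and send a CP map to its transfer matrix in that basis. Your compositionality argument via the basis expansion $\mathcal{E}(A_\lambda)=\sum_{\lambda'}\tfrac{1}{d'}\Tr(A_{\lambda'}\mathcal{E}(A_\lambda))A_{\lambda'}$ is in fact cleaner than the paper's detour through ``point channels'' followed by linearity.

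Two execution differences are worth noting. First, the even-dimension obstacle you anticipate is handled in the paper not by an even-dimensional Wigner theory but by a blunt precomposition trick: a functor $\CPs[\cat{FHilb}]\to\CPs[\cat{FHilb}]$ that sends even-dimensional $A$ to $A\oplus\mathbb{C}$ and pads morphisms with zeros, then a further reduction by direct sums to full matrix algebras $\mathbb{M}_n$ with $n$ odd. Second, for the ``maximally epistemic up to a factor'' clause the paper does \emph{not} compute $\mu_\psi(\Lambda_\phi)$ via the overlap identity as you propose; instead it shows that the trace distance $\tfrac12\Tr|\rho-\tau|$ equals $n$ times the $\ell^1$/variation distance between the Wigner vectors, using $\Tr|\sigma_i|=n$ from $\sigma_i^2=1$. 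This is more robust than your support-based route: for non-stabiliser $\phi$ the Wigner function $W_\phi$ is signed, so ``support $\Lambda_\phi$'' and ``uniform bound on $W_\phi$ over $\Lambda_\phi$'' are not well-defined in the way your last paragraph assumes, and the absorption-into-a-constant step does not go through as written. Your normalisation worry ($F(\mathcal{E})(\lambda,\Lambda_{d'})=1$ requiring trace preservation) is well-founded and is glossed over in the paper too.
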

\begin{proof}
  We may without loss of generality restrict to C*-algebras of odd dimension, by considering the functor $\CPs[\cat{FHilb}] \to \CPs[\cat{FHilb}]$ that maps C*-algebras of odd dimension to themselves, C*-algebras $A$ of even dimension to $A \oplus \mathbb{C}$; on morphisms, it embeds a completely map into the top-left corner of a block matrix whose other entries are zero. 
  Furthermore, extending by direct sums, we may restrict to C*-algebras $\M_n$ of all $n$-by-$n$ matrices for odd $n$.

  Our construction starts with, for each odd number $n$, a family $\Lambda_n$ of $n^2$ many $n$-by-$n$ matrices with the following properties:
  \begin{enumerate}
    \item $\sigma=\sigma^\dag$ for each $\sigma \in \Lambda_n$;
    \item $\Tr(\sigma)=1$ and $\sigma^2=1$ for each $\sigma \in \Lambda_n$;
    \item $\Tr(\sigma \tau)=0$ for distinct $\sigma,\tau \in \Lambda_n$.
  \end{enumerate}
  We will see later that such a family indeed exists.
  Observe that $\Lambda_n$ is an orthonormal basis for $\M_n$ under the Hilbert-Schmidt inner product.
  We may read the fact that any completely positive map $f \colon \M_m \to \M_n$ is completely determined by its action on $\Lambda_m$, as saying that the effect of a quantum channel is completely determined by how it acts on a tomographically complete set of observables.
  Thus it is completely determined by its \emph{transfer matrix} $f_{ij} = \Tr(\sigma_i^n f(\sigma_j^m))/m$. Note that the dimension of the transfer matrix is $n^2 \times m^2$.

  Suppose for a minute that $f \colon \M_m \to \M_n$ implements a function $f' \colon \{1,\ldots,m^2\}\to\{1,\ldots,n^2\}$ via $f_{ij} = \delta_{f'(i),j}$; we call such a map $f$ a \emph{point channel}.
  If $g \colon \M_n \to \M_p$ is another point channel, it follows from properties 2 and 3 that:
  \begin{align*}
	(g \circ f)_{ij} 
	& = \Tr(\sigma^p_j \sigma^p_{g'(f'(i))})/p 
	  = \delta_{j,g'(f'(i))} \\
	& = \sum_{k=1}^{n^2} \delta_{j,g'(k)} \delta_{k,f'(i)} 
	  = \frac{1}{np} \sum_{k=1}^{n^2} \Tr(\sigma_j^p \sigma^p_{f'(k)}) \Tr(\sigma_k^m \sigma^m_{f'(i)}) 
	  = \sum_{k=1}^{n^2} g_{ik} f_{kj}\text.
  \end{align*}
  Because any channel $f$ is a normalised linear combination of point channels, it follows that the matrix of $g \circ f$ of a composition is the multiplication of the matrices of $g$ and $f$.

  Next, the fact that $\Lambda_n$ is an orthonormal basis implies that any density matrix $\rho \in \M_n$ is determined by the coefficients 
  $
    v_i(\rho)=\Tr(\rho \sigma_i)/n
  $
  as $\rho=\sum v_i(\rho) \sigma_i$.
  Because $Tr(\rho \sigma_i)$ is the expectation value of $\sigma_i$ when measuring the state $\rho$, in fact $-n \leq v_i(\rho) \leq n$.
  We will regard the normalised vector $v(\rho)=(v_1(\rho),\ldots,v_n(\rho))/n$ as the quasiprobability distribution associated to the state $\rho$, and more generally the normalised matrix $f_{ij}$ as the stochastic map associated to the channel $f$.
  That is, the functor $F \colon \CPs[\cat{FHilb}] \to \cat{QSRel}$ sends an object $\M_n$ to the set $\Lambda_n$ under the discrete $\sigma$-algebra, and it sends a morphism $f \colon \M_m \to \M_n$ to the signed Markov kernel $Ff \colon \Lambda_m \times \Sigma_{\Lambda_n} \to [-1,1]$ given by
  \[
    Ff (\sigma, W) = \sum_{k \in W} ((f_{ij}) v(\sigma))_k
    \text,
  \]
  where for ease of notation we pretended that $W$ contained indices of matrices from $\Lambda_n$ rather than the matrices themselves.
  Intuitively, the map $Ff$ takes an input quasiprobability vector, applies the  transfer matrix $f_{ij}$, and thus obtains an output quasiprobability vector.
  
  Observe that indeed $f(-, W) \colon \Lambda_m \rightarrow [-1, 1]$ is a bounded measurable function, and that indeed $f(\sigma, -) \colon \Sigma_{\Lambda_n} \to [-1, 1]$ is a signed measure. Condition $2$ ensures that the quasiprobability distributions are normalised. Thus $F$ is well-defined.
  It is a functor, because, as we have seen, composition is preserved when moving from morphisms to their transfer matrices.

  We interpret this functor as assigning Wigner functions to quantum states. Specifically, the quasiprobability vector $v(\rho)$ is the Wigner function associated to the state $\rho$. These quasiprobability distributions are defined over phase space. The operators $\sigma \in \Lambda_n$ are known as \emph{phase space point operators} and are observables associated to each point in phase space. Describing them in detail is beyond the scope of this paper, and we refer the reader to~\cite{gibbons2004discrete, gross2006hudson} for the appropriate details. For our construction, it is sufficient that such operators exist and satisfy conditions 1--3.

  It remains to show that the operational model $F$ is maximally epistemic up to a factor: that the overlap of quantum states matches that of their Wigner functions up to a multiplicative constant.
  Consider two density matrices $\rho, \tau \in \mathbb{M}_n$. Their trace distance is given by
  \[
    \frac{1}{2} Tr(|\rho - \tau|)
    = \frac{1}{2} \sum_{i=1}^{n^2} | v_i(\rho) - v_i(\tau) | Tr(|\sigma_i|)\text.
  \]
  Now condition $2$ implies $Tr(|\sigma_i|) = n$, and therefore
  the trace distance is
  $
    \frac{n}{2} \sum_{i=1}^{n^2} | v_i(\rho) - v_i(\tau) | 
  $.
  But this equals the variation distance between the two Wigner functions. It follows that whenever two quantum states have nontrivial overlap, their Wigner functions will also have nontrivial overlap.
\end{proof}

\begin{remark}
  The epistemic operational model of Theorem~\ref{thm:signed} is in fact monoidal when restricting to the subcategory of $\CPs[\cat{FHilb}]$ of odd-dimensional C*-algebras (and taking $\mathbb{C}^3$ as distinguishing object).
\end{remark}
\begin{proof}
  The tensor product of two completely positive maps $f_1\colon \M_{m_1} \to \M_{n_1}$ and $f_2 \colon \M_{m_2} \to \M_{n_2}$ is $f_1 \otimes f_2 \colon \M_{m_1m_2} \to \M_{n_1n_2}$. For odd $m$ and $n$, the operators of $\Lambda_m \otimes \Lambda_n$ again satisfy properties 1--3 in the proof of Theorem~\ref{thm:signed}. Bilinearity of the tensor product thus shows that the transfer matrix of $f_1 \otimes f_2$ is the tensor product of the transfer matrices of $f_1$ and $f_2$.
\end{proof}

\section*{Acknowledgements}
We thank Matty Hoban and Petros Wallden for useful discussions.
We are also grateful to the anonymous referees for useful comments and suggestions.
 Alexandru Gheorghiu is supported by MURI Grant FA9550-18-1-0161 and the IQIM, an NSF Physics Frontiers Center (NSF Grant PHY-1125565) with support of the Gordon and Betty Moore Foundation (GBMF-12500028). Chris Heunen is supported by EPSRC Fellowship EP/R044759/1.

\bibliographystyle{eptcs}
\bibliography{pbr}

\appendix 
\section{The PBR theorem} \label{sec:pbr}

This appendix briefly recalls the PBR theorem and its assumptions.
The PBR theorem shows that there cannot be an ontological model of quantum mechanics that is epistemic and simultaneously satisfies a property known as preparation independence~~\cite{leifer:review}:

\begin{definition}\label{def:pip}
  An ontological theory of quantum mechanics satisfies the \emph{preparation independence postulate (PIP)}, if given Hilbert spaces $H_1$ and $H_2$, with associated ontic spaces $\Lambda_1$ and $\Lambda_2$, the following are satisfied:
  \begin{description}
    \item[Cartesian product assumption (CPA)] 
      Let $\{ \ket{\psi} \otimes \ket{\phi} \mid \ket{\psi} \in H_1, \ket{\phi} \in H_2 \}$ be the set of all product states in $H_1 \otimes H_2$. Its ontic space is $\Lambda_1 \times \Lambda_2$, with $\sigma$-algebra $\Sigma_{\Lambda_1} \otimes \Sigma_{\Lambda_2}$.
    \item[No-correlation assumption (NCA)] 
      Given states $\ket{\psi} \in H_1$ and $\ket{\phi} \in H_2$, with associated measures $\mu_\psi \colon \Sigma_{\Lambda_1} \to [0,1]$ and $\mu_\phi \colon \Sigma_{\Lambda_2} \to [0,1]$, the measure  $\mu_{\psi \otimes \phi} \colon \Sigma_{\Lambda_1} \otimes \Sigma_{\Lambda_2} \to [0,1]$ associated to $\ket{\psi} \otimes \ket{\phi}$ is determined by
	  \[
	    \mu_{\psi \otimes \phi}(U \times V) = \mu_{\psi}(U) \mu_{\phi}(V)
	  \]
      for all $U \in \Sigma_{\Lambda_1}$ and $V \in \Sigma_{\Lambda_2}$\footnote{Note that for sets $Z \in \Sigma_{\Lambda_1} \otimes \Sigma_{\Lambda_2}$ that cannot be expressed as $Z = U \times V$ with $U \in \Sigma_{\Lambda_1}$ and $V \in \Sigma_{\Lambda_2}$, we simply use the Kolmogorov sum rule, which states that for any disjoint $U,V \in \Sigma$ and any measure $\mu \colon \Sigma \to [0,1]$ it is the case that $\mu(U \cup V) = \mu(U) + \mu(V)$. Specifically, any set $Z \in \Sigma_{\Lambda_1} \otimes \Sigma_{\Lambda_2}$ can be expressed as a countable union of (cartesian) products of sets from $\Sigma_{\Lambda_1}$ and $\Sigma_{\Lambda_2}$. Knowing the value of $\mu_{\psi \otimes \phi}$ on those sets and using the Kolmogorov sum rule determines its value on any set $Z$.}.
  \end{description}
\end{definition}

\begin{theorem}[PBR \cite{puseybarrettrudolph:reality}] \label{thm:pbrproper}
  Any ontological theory of quantum mechanics satisfying PIP must be ontic.
\end{theorem}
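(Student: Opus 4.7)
The plan is to prove the contrapositive: assuming the ontological theory is epistemic and satisfies PIP, derive a contradiction by constructing a joint anti-distinguishing measurement on several copies of the system. The whole argument exploits the interaction between (i) the product structure forced by PIP and (ii) the property that response functions must be nonnegative and sum to one pointwise.

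First I would invoke the epistemic hypothesis to obtain a Hilbert space $H$ and two states $\ket{\psi},\ket{\phi}\in H$ with $0<|\braket{\psi\mid\phi}|<1$ such that the supports $\Lambda_\psi$ and $\Lambda_\phi$ of $\mu_\psi$ and $\mu_\phi$ genuinely overlap, in the sense that there is a set $\Delta\in\Sigma_\Lambda$ with $\mu_\psi(\Delta)>0$ and $\mu_\phi(\Delta)>0$. (This positive-overlap strengthening of $D(\mu_\psi,\mu_\phi)<1$ is standard; otherwise one can intersect the supports and use $D<1$.) Next, using $|\braket{\psi^{\otimes n}\mid\phi^{\otimes n}}|=|\braket{\psi\mid\phi}|^n$, I pick $n$ large enough so that this overlap is at most $1/\sqrt{2}$, and apply a local unitary to bring the pair $(\psi^{\otimes n},\phi^{\otimes n})$ into the canonical position of $\ket 0$ and $\ket +$ (up to an extra ancilla absorbed by a Stinespring-style dilation).

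Next I invoke the PBR anti-distinguishing measurement, i.e.\ the four-outcome projective measurement $M=\{\Pi_1,\ldots,\Pi_4\}$ on $H^{\otimes 2}$ satisfying $\braket{x_1 x_2\mid\Pi_{k(x_1,x_2)}\mid x_1 x_2}=0$ for each of the four product states $\ket{x_1 x_2}$ with $x_i\in\{0,+\}$, as referenced in the proof of Theorem~\ref{thm:categoricalpbr}. Now I use PIP: by CPA the ontic space for the two-copy system is $\Lambda\times\Lambda$ with the product $\sigma$-algebra, and by NCA the four measures corresponding to the preparations $\ket{x_1 x_2}$ are the product measures $\mu_{x_1}\otimes\mu_{x_2}$. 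Therefore all four measures assign positive mass to the product set $\Delta\times\Delta$, which in particular lies in each of the four supports.

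The final step is the contradiction. Writing $\xi_k$ for the response function of outcome $k$ on the ontic space $\Lambda\times\Lambda$, condition (3) in the definition of an ontological theory combined with the anti-distinguishing property gives
\[
\int_{\Lambda\times\Lambda}\xi_k\,d(\mu_{x_1}\otimes\mu_{x_2})=0
\]
for the unique pair $(x_1,x_2)$ whose outcome is $k$. Since $\xi_k\ge 0$, it must vanish $(\mu_{x_1}\otimes\mu_{x_2})$-almost everywhere, and hence almost everywhere on the intersection of the four supports, which contains $\Delta\times\Delta$ up to a set of measure zero under each product measure. Summing over $k$ contradicts the normalization $\sum_{k=1}^{4}\xi_k(\lambda)=1$ holding pointwise on $\Lambda\times\Lambda$. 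The main obstacle is the measure-theoretic bookkeeping in this last step: the ``almost everywhere'' vanishing has to be promoted to a pointwise contradiction, which is why one needs the Borel-space hypothesis so that supports behave well and one can pass from $\int\xi_k\,d\mu=0$ to $\xi_k=0$ on the support of $\mu$ except on a $\mu$-null set, and then choose a single $\lambda\in\Delta\times\Delta$ outside the union of these four null sets.
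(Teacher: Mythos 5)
Your proposal follows the paper's own proof essentially step for step: assume epistemic, reduce an arbitrary overlapping pair to $\ket{0},\ket{+}$ via tensor powers and a Stinespring-dilated map, invoke the four-outcome anti-distinguishing measurement~\eqref{eqn:measurement1}--\eqref{eqn:measurement4}, use PIP to get product measures with a common region of positive mass, and derive a contradiction with $\sum_k \xi_k = 1$. The only cosmetic differences are that you perform the reduction to $\ket{0},\ket{+}$ up front rather than at the end, and you obtain the common overlap region directly from the product-measure structure (as $\Delta\times\Delta$) instead of via subadditivity of the variational distance; both are fine and arguably cleaner.

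One caveat on the step you yourself single out as the main obstacle. First, ``there is a set $\Delta$ with $\mu_\psi(\Delta)>0$ and $\mu_\phi(\Delta)>0$'' is too weak a formalisation of overlap (it holds with $\Delta=\Lambda$ for any two probability measures); the content of $D(\mu_\psi,\mu_\phi)<1$ is that $\mu_\psi$ and $\mu_\phi$ are not mutually singular. Second, and relatedly, your plan to ``choose a single $\lambda\in\Delta\times\Delta$ outside the union of these four null sets'' does not go through as stated: each exceptional set $N_k=\{\xi_k>0\}$ is null only for \emph{its own} product measure $\mu_{x_1(k)}\otimes\mu_{x_2(k)}$, and four sets that are null for four different measures can jointly cover a set that each measure assigns positive mass to. The standard repair is to introduce the overlap measure $\omega=\mu_0\wedge\mu_+$ (nonzero precisely because $D<1$), note $\omega\otimes\omega\le \mu_{x_1}\otimes\mu_{x_2}$ for all four preparations, conclude $\int\xi_k\,\mathrm{d}(\omega\otimes\omega)=0$ for every $k$, and sum to get $(\omega\otimes\omega)(\Lambda\times\Lambda)=0$, a contradiction. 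To be fair, the paper's own proof is explicitly a sketch and elides exactly the same point, so your proposal is at the same level of rigour; but since you flag the bookkeeping as the crux, the single-dominating-measure argument is the one that actually closes it.
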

\begin{proof}[Proof sketch]
  We will merely sketch the general ideas behind the proof. For the full proof see \cite{puseybarrettrudolph:reality} or \cite{leifer:review}.
  Suppose we have an epistemic ontological theory of quantum mechanics. This means that there exists a pair of distinct quantum states that have overlapping support in their distributions over ontic states. If $\ket{0}$ and $\ket{+}$ are two such states, then
  \begin{equation} \label{ineq:overlap}
    D(\mu_0, \mu_+) < 1\text.
  \end{equation}
  Now suppose we have two agents, Alice and Bob, that independently prepare either the $\ket{0}$ or the $\ket{+}$ states and send them to a third agent, Eve. The states that Eve can possibly receive are $\ket{0}_A \ket{0}_B$, $\ket{0}_A \ket{+}_B$, $\ket{+}_A \ket{0}_B$, and $\ket{+}_A \ket{+}_B$.
  By the preparation independence postulate, the ontic space associated to these states is the product of the ontic spaces for Alice and Bob's states and the measures will be product measures.
  Using subadditivity of the variational distance, together with inequality~\eqref{ineq:overlap} and the fact that $D(\mu, \nu) = D(\nu, \mu)$, for any measures $\mu$ and $\nu$:
  \begin{align} \label{ineq:productoverlap1}
	D(\mu_{00}, \mu_{0+}) <& 1 & D(\mu_{00}, \mu_{+0}) <& 1 \\
  \label{ineq:productoverlap2}
	D(\mu_{++}, \mu_{0+}) <& 1 & D(\mu_{++}, \mu_{+0}) <& 1
  \end{align}
  What about $D(\mu_{00}, \mu_{++})$? From inequality~\eqref{ineq:overlap}, we conclude that the measures associated to $\ket{0}$ and $\ket{+}$ have nontrivial overlap. But given that the measures associated to $\ket{0}_A \ket{0}_B$ and $\ket{+}_A \ket{+}_B$ are product measures, this means that they will also have nontrivial overlap\footnote{Another way of saying this is that if $D(\mu, \nu) < 1$ then $D(\mu \times \mu, \nu \times \nu) < 1$, for any two measures $\mu$ and $\nu$.}:
  \begin{equation} \label{ineq:productoverlap3}
    D(\mu_{00}, \mu_{++}) < 1
  \end{equation}
  Now inequalities~\eqref{ineq:productoverlap1}--\eqref{ineq:productoverlap3} provide a subset of ontic states $\Delta \in \Sigma_{AB}$, $\Delta \neq \emptyset$, such that $\mu_{00}(\Delta)$, $\mu_{0+}(\Delta)$, $\mu_{+0}(\Delta)$, and $\mu_{++}(\Delta)$ are all strictly positive. Figure~\ref{fig:pbrimg} illustrates this fact.

  \begin{figure}[htbp!]
    \centering
    \includegraphics[scale=0.25]{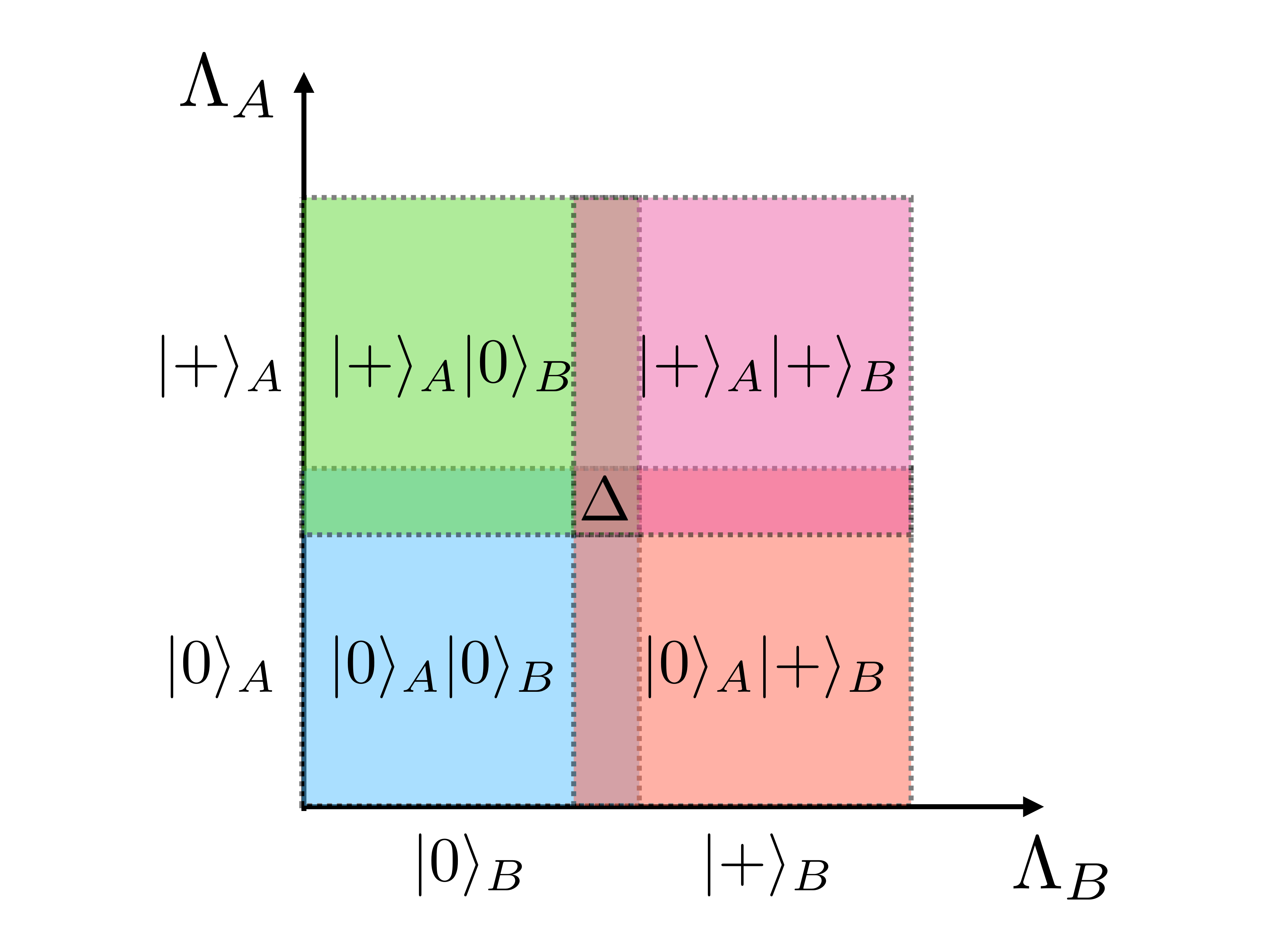}
    \caption{Schematic illustration of the ontic space. The coloured regions represent sets in ontic space on which the distributions have non-zero support. For instance, the blue region corresponds to those ontic states on which $\mu_{00}$ has non-zero support. Notice that all four states overlap in the middle region, $\Delta$.}
    \label{fig:pbrimg}
  \end{figure}

  Suppose now that Eve performs a projective measurement on her two qubits, defined by the following two-qubit basis vectors: 
  \begin{align} 
    \label{eqn:measurement1}
	\ket{\chi_1} &= \frac{1}{\sqrt{2}} ( \ket{0}_A \ket{1}_B + \ket{1}_A \ket{0}_B ) \\
	\label{eqn:measurement2}
	\ket{\chi_2} &= \frac{1}{\sqrt{2}} ( \ket{0}_A \ket{-}_B + \ket{1}_A \ket{+}_B ) \\
	\label{eqn:measurement3}
	\ket{\chi_3} &= \frac{1}{\sqrt{2}} ( \ket{+}_A \ket{1}_B + \ket{-}_A \ket{0}_B ) \\
	\label{eqn:measurement4}
	\ket{\chi_4} &= \frac{1}{\sqrt{2}} ( \ket{+}_A \ket{-}_B + \ket{-}_A \ket{+}_B )
  \end{align}
  Because we are considering an epistemic model, we will associate a response function to each of these outcomes and denote them $\xi_1, \xi_2, \xi_3, \xi_4 \colon \Lambda_{AB} \rightarrow [0,1]$. 
  Now
  \[
    \braket{00|\chi_1} = \braket{0+|\chi_2} = \braket{+0|\chi_3} = \braket{++|\chi_4} = 0\text.
  \]
  In other words, whatever outcome Eve obtains from her measurement, it will certainly rule out one of the four possible states that she received.
  But then:
  \begin{align*}
	\int_{\Lambda_{AB}} \xi_1(\lambda) d\mu_{00}(\lambda) &= 0 
	&
	\int_{\Lambda_{AB}} \xi_2(\lambda) d\mu_{0+}(\lambda) &= 0 
	\\
	\int_{\Lambda_{AB}} \xi_3(\lambda) d\mu_{+0}(\lambda) &= 0 
	&
	\int_{\Lambda_{AB}} \xi_4(\lambda) d\mu_{++}(\lambda) &= 0
  \end{align*}
  Since there exists a non-trivial $\Delta$ such that $\mu_{00}(\Delta), \mu_{0+}(\Delta), \mu_{+0}(\Delta), \mu_{++}(\Delta) > 0$, it must be the case that for all $k \in \{1,2,3,4\}$ and for all $\lambda \in \Delta$, $\xi_k(\lambda) = 0$. However, this contradicts the fact that 
  \begin{equation*}
    \sum\limits_{k=1}^4 \xi_k(\lambda) = 1
  \end{equation*}
  for all $\lambda \in \Lambda_{AB}$.

  This argument assumed that the states having non-trivial overlap in ontic space are $\ket{0}$ and $\ket{+}$. PBR showed that the above argument can be generalized for any pair of states $\ket{\psi}$ and $\ket{\phi}$ with $0 < |\braket{\psi\mid\phi}| < 1$. In fact, in the simple proof given above, the only place that explicitly used $\ket{0}$ and $\ket{+}$ was to define Eve's entangled measurement. The generalization of PBR consists in showing that Eve can always construct such an entangled measurement for $n$-fold tensor products of $\ket{\psi}$ and $\ket{\phi}$ if $n$ is sufficiently large. The intuition for this, as explained in \cite{leifer:review}, is the following. Suppose one considers two states $\ket{\psi}$ and $\ket{\phi}$ such that $|\braket{\psi \mid \phi}| < 1$. Clearly, there exists an $n > 0$ such that $|\braket{\psi^{\otimes n} \mid \phi^{\otimes n}}| \leq 1/\sqrt{2}$. We also know that $\braket{0\mid +} = 1/\sqrt{2}$. If the inner product between $\ket{\psi}^{\otimes n}$ and $\ket{\phi}^{\otimes n}$ is at most that between $\ket{0}$ and $\ket{+}$, that is, if $\ket{\psi}^{\otimes n}$, $\ket{\phi}^{\otimes n}$ are at least ``as distinguishable'' as $\ket{0}$ and $\ket{+}$, then there exists a mapping from $\ket{\psi}^{\otimes n}$ to $\ket{0}$ and from $\ket{\phi}^{\otimes n}$ to $\ket{+}$
  . Eve can then perform the previously described anti-distinguishing measurement.
\end{proof}

\section{Quantum measures} \label{sec:quantummeasure}

Section~\ref{sec:signed} showed one way to evade the PBR obstruction: replace ordinary probability measures with signed measures. Another option is to allow positive measures that can violate the Kolmogorov sum rule. These measures should not be completely unconstrained, and should still reproduce the Hilbert space inner product. There is a natural candidate that satisfies these properties, namely \emph{quantum measures} (sometimes also called \emph{quantal measures}) \cite{sorkin1994quantum, sorkin1995quantum, salgado2002some, martin2005random, surya2010quantum, dowker2010extending}. We give two equivalent definitions of quantum measures, taken from \cite{dowker2010extending}:

\begin{definition}
Let $\Lambda$ be a measurable space with associated $\sigma$-algebra $\Sigma_{\Lambda}$. A \emph{quantum measure} over $\Sigma_{\Lambda}$ is a function $\mu \colon \Sigma_{\Lambda} \rightarrow [0,1]$ satisfying the following properties:
\begin{itemize}
\item \textbf{Positivity}. For all $U \in \Sigma_{\Lambda}$, $\mu(U) \geq 0$;
\item \textbf{Normalisation}. $\mu(\Lambda) = 1$;
\item \textbf{Quantum sum rule}. For all pairwise disjoint sets $U,V,W \in \Sigma_{\Lambda}$:
\begin{equation*}
\mu(U \cup V \cup W) = \mu(U \cup V) + \mu(U \cup W) + \mu(V \cup W) - \mu(U) - \mu(V) - \mu(W)\text.
\end{equation*}
\end{itemize}
\end{definition}

An equivalent characterization uses \emph{decoherence functionals}, first considered in \cite{dowker2010hilbert}, which yields a natural notion of inner product over a $\sigma$-algebra.

\begin{definition}
Let $\Lambda$ be a measurable space with associated $\sigma$-algebra $\Sigma_{\Lambda}$. A \emph{decoherence functional} is a function $\mathcal{D} \colon \Sigma_{\Lambda} \times \Sigma_{\Lambda} \rightarrow \mathbb{C}$ satisfying the following properties:
\begin{itemize}
\item \textbf{Hermitian}. For all $U,V \in \Sigma_{\Lambda}$, $\mathcal{D}(U,V) = \mathcal{D}(V,U)^*$;
\item \textbf{Normalisation}. $D(\Lambda, \Lambda) = 1$;
\item \textbf{Finite bi-additivity}. For all $U \in \Sigma_{\Lambda}$ and all mutually disjoint sets $V_1,\ldots,V_n \in \Sigma_{\Lambda}$:
\begin{equation*}
\mathcal{D}(U, \bigcup_{i=1}^n V_i) = \sum\limits_{i=1}^m \mathcal{D}(U, V_i)
\end{equation*}
Similarly, for all $V \in \Sigma_{\Lambda}$ and all mutually disjoint sets $U_1,\ldots,U_n \in \Sigma_{\Lambda}$:
\begin{equation*}
\mathcal{D}(\bigcup_{i=1}^n U_i, V) = \sum\limits_{i=1}^m \mathcal{D}(U_i, V)
\end{equation*}
\item \textbf{Strong positivity}. For any $U_1,\ldots,U_n \in \Sigma_{\Lambda}$, the $n \times n$ matrix $\mathcal{D}(U_i, U_j)$ is positive semidefinite.
\end{itemize}
\end{definition}

As shown in \cite{dowker2010hilbert}, the decoherence functional allows for an alternative definition of the quantum measure: if $\mathcal{D}$ is a decoherence functional on a measurable space $\Lambda$ with $\sigma$-algebra $\Sigma_\Lambda$, then $\mu \colon \Sigma_\Lambda \to [0,1]$ given by $\mu(U) = \mathcal{D}(U,U)$ is a quantum measure over $\Sigma_\Lambda$.

Instead of the functor taking values in $\cat{SRel}$, one might envision a receiving category whose states are quantum measures rather than probability measures. Objects would still be measurable spaces, as before. But morphisms, instead of being Markov kernels, would now be functions $f \colon X \times \Sigma_Y \to [0,1]$ such that $f(-,V) \colon X \to [0,1]$ is a measurable function for each $V$, and $f(x,-) \colon \Sigma_Y \to [0,1]$ is a \emph{quantum} measure. This seems to be monoidal as before. 
The problem is that it is unclear how to define composition of such morphisms. We would like to say that $(g \circ f)(x,W)=\int g(y,W) f(x,\mathrm{d}y)$. But that needs a good notion of integration against quantum measures, as it is unclear whether $(g \circ f)(x,-)$ is again a well-defined quantum measure. Moreover, associativity of this composition seems to come down to a Fubini-type theorem. Such a theory of quantum integration seems only to be embryonic as of yet~\cite{gudder:integrals,gudder:integration}, presumably because so far quantum measures have mostly been used to model causal sets, in which context integrals do not naturally fit.

Trying to define the desired category as the Kleisli category of a Giry-like monad, that takes a measurable space $X$ to the set $Q(X)$ of \emph{quantum} measures on it, runs into similar issues. The unit, given by Dirac delta functions, is still well-defined because probability measures are certainly quantum measures. But it is unclear whether the natural candidate for the multiplication, that sends $\Phi \in Q(Q(X))$ to the function that assigns to $U \in \Sigma_X$ the number $\int \Phi(\{ \phi \in Q(X) \mid \phi(U)>t \})\mathrm{d}t$, is well-defined at all.

\end{document}